\documentclass[10pt,journal,twocolumn]{IEEEtran}

\IEEEoverridecommandlockouts
\usepackage{cite}
\usepackage{amsmath,amssymb,amsfonts,booktabs}
\usepackage{graphicx}
\usepackage{textcomp}
\usepackage{xcolor}
\usepackage{enumerate}
\usepackage{epstopdf}
\usepackage{subcaption}
\usepackage{amsthm}

\newtheorem{lemma}{Lemma}
\newtheorem{remark}{Remark}
\newtheorem{proposition}{Proposition}

\usepackage{multirow} 

\definecolor{mygray}{RGB}{240,240,240}
\usepackage{colortbl}

\usepackage{algorithm}
\usepackage{algorithmic}

\allowdisplaybreaks[4]

\begin{document}

\title{Movable-Antenna Enabled Robust Vehicular Consumer Networks Under Imperfect CSI}

\author{
        Xuhui Zhang,~\IEEEmembership{Member,~IEEE,}
        Chunjie Wang,~\IEEEmembership{Graduate Student Member,~IEEE,}\\
        Wenchao Liu,~\IEEEmembership{Graduate Student Member,~IEEE,}
        Huijun Xing,~\IEEEmembership{Graduate Student Member,~IEEE,}\\
        Jinke Ren,~\IEEEmembership{Member,~IEEE,}
        Zheng Xing,~\IEEEmembership{Member,~IEEE,}
        and
        Yanyan Shen,~\IEEEmembership{Member,~IEEE}


\thanks{
Xuhui Zhang is with Shenzhen Institutes of Advanced Technology, Chinese Academy of Sciences, Guangdong 518055, China, and also with the Shenzhen Future Network of Intelligence Institute, the School of Science and Engineering, and the Guangdong Provincial Key Laboratory of Future Networks of Intelligence, The Chinese University of Hong Kong (Shenzhen), Guangdong 518172, China (e-mail: xu.hui.zhang@foxmail.com).
}

\thanks{
Chunjie Wang is with Shenzhen Institutes of Advanced Technology, Chinese Academy of Sciences, Guangdong 518055, China, and also with the University of Chinese Academy of Sciences, Beijing 100049, China (e-mail: cj.wang@siat.ac.cn).
}

\thanks{
Wenchao Liu is with the School of Automation and Intelligent Manufacturing, Southern University of Science and Technology, Guangdong 518055, China. (e-mail: wc.liu@foxmail.com).
}

\thanks{
Huijun Xing is with the Department of Electrical and Electronic Engineering, Imperial College London, London SW7 2AZ, The United Kingdom (e-mail: huijunxing@link.cuhk.edu.cn).
}

\thanks{
Jinke Ren is with the School of Science and Engineering, the Shenzhen Future Network of Intelligence Institute, and the Guangdong Provincial Key Laboratory of Future Networks of Intelligence, The Chinese University of Hong Kong (Shenzhen), Guangdong 518172, China (e-mail: jinkeren@cuhk.edu.cn).
}

\thanks{
	Zheng Xing is with the College of Computer Science and Software Engineering, Shenzhen University, Guangdong 518060, China. (e-mail: zhengx@szu.edu.cn)
}

\thanks{
Yanyan Shen is with Shenzhen Institutes of Advanced Technology, Chinese Academy of Sciences, Guangdong 518055, China (e-mail: yy.shen@siat.ac.cn).
}

}

\maketitle

\begin{abstract}
The accelerating advancement of intelligent transportation systems has established consumer-oriented vehicular networks (CVNs) as a critical infrastructure for next-generation connected mobility. 
However, the high mobility of vehicular users (VUs) introduces significant channel state information (CSI) uncertainty, which severely undermines the performance of conventional fixed-position antenna systems. 
To address this, this paper explores the deployment of movable-antennas (MAs) to enhance communication robustness in CVNs under imperfect CSI conditions. 
We develop a joint optimization framework that dynamically coordinates the spatial positioning of MAs and transmit beamforming at the base station, with the objective of maximizing the worst-case sum rate across all VUs. 
The problem is formulated as a non-convex max–min optimization problem, subject to bounded CSI estimation errors, transmit power limits, and physical constraints on antenna displacement. 
By adopting an alternating optimization strategy, the original problem is decomposed into tractable subproblems, solved via techniques including the S-Procedure, Schur complement, and successive convex approximation. 
Numerical evaluations confirm that the proposed approach achieves substantial gains over existing benchmarks in terms of worst-case throughput.
\end{abstract}

\begin{IEEEkeywords}
Vehicular consumer networks, movable antenna, CSI estimation, robust optimization.
\end{IEEEkeywords}

\section{Introduction}

\IEEEPARstart{T}{he} accelerating advancement of intelligent transportation systems has elevated vehicular networks to a pivotal role in next-generation mobility, redefining vehicle-to-vehicle and vehicle-to-infrastructure interactions \cite{9509294, 9779322}. 
Among these, consumer-oriented vehicular networks (CVNs) have witnessed rapid adoption, driven by automakers and technology firms that embed wireless connectivity into mainstream automotive platforms \cite{11359731}. 
Such networks empower diverse use cases, including dynamic route guidance, real-time traffic alerts, in-cabin multimedia services, remote vehicle diagnostics, and over-the-air firmware upgrades \cite{10133894}, thereby enriching both driver usability and passenger engagement \cite{9146378}. 
The emergence of sixth-generation (6G) communication standards further intensifies the requirement for ultra-reliable, low-latency wireless links capable of sustaining high-data-rate services during vehicular motion \cite{10012421, 11374087}. 
Consequently, the automotive sector is shifting from standalone onboard systems toward cloud-integrated architectures, demanding wireless infrastructures that are not only scalable and secure but also adaptive to dynamic mobility patterns \cite{10287319}. 
Multi-antenna technologies, leveraging spatial multiplexing and directional beamforming, offer a pathway to concurrently serve multiple mobile users \cite{wang2025joint}. 
Their integration into CVNs represents a foundational enabler for the broader vision of intelligent and autonomous mobility ecosystems, balancing user experience with technological robustness \cite{10874187, 11396947}.

In CVNs, the inherent mobility of vehicles poses a critical limitation to conventional fixed-position antenna (FPA) architectures \cite{liu2025uav}. 
FPAs are typically engineered for quasi-static propagation conditions and often struggle to sustain reliable connectivity under high-velocity motion or in complex urban environments characterized by deep fading and multipath shadowing \cite{11391523, 11328802}. 
The resulting rapid fluctuations in channel state information (CSI), coupled with frequent inter-base station (BS) handovers, lead to significant signal-to-interference-plus-noise (SINR) degradation and elevated packet loss, particularly detrimental to latency-sensitive, high-throughput downlink services such as real-time video streaming or autonomous driving assistance \cite{liu2025movable}. 
To mitigate these challenges, recent efforts have explored movable-antenna (MA) systems, wherein antenna elements are dynamically repositioned, mechanically or electronically, to align with the direction of the target receiver \cite{10416363, 10654366, 11374005}. 
Yet, the performance of MA-enabled systems heavily relies on precise knowledge of vehicle location and orientation, a requirement that becomes increasingly fragile under aggressive acceleration, lane changes, or in global positioning system-challenged zones such as tunnels or dense urban canyons. 
Therefore, robust optimization frameworks must be integrated into CVN designs to explicitly account for spatial uncertainty and imperfect CSI estimation \cite{10663924}.

{
In practical vehicular communication scenarios, particularly within dense urban environments, the accurate and real-time acquisition of CSI remains a fundamental challenge \cite{9355403}. The difficulty stems from the confluence of high-speed vehicle mobility, rapid time-varying multi-path propagation due to dynamic scattering and Doppler-induced channel fluctuations, frequent non-line-of-sight conditions due to buildings and infrastructure, and dynamic shadowing caused by moving obstacles (e.g., other vehicles or pedestrians).
Conventional beamforming and resource allocation schemes, which assume perfect or static CSI, become highly vulnerable under such conditions, leading to significant throughput loss, increased outage probability, and degraded quality of service for safety-critical applications \cite{9110587}.
To mitigate these risks, a robust optimization framework is essential, one that explicitly models CSI uncertainty, (e.g., via bounded error sets), and designs advanced transmission strategies, (e.g., beamforming and antenna position design), to guarantee minimum performance even in the worst-case realization of the uncertain channel \cite{9293148}.
Such an approach not only enhances reliability but also aligns with the stringent latency and dependability requirements of the next-generation CVNs.
}

Motivated by the imperative to enhance the data transmission reliability in highly dynamic vehicular environments, we propose a novel CVN architecture in which a BS equipped with MAs serves multiple single-antenna vehicular users (VUs). 
In contrast to the conventional systems assuming perfect CSI of VUs, our framework explicitly incorporates channel uncertainty stemming from high-velocity motion of VUs or sensor measurement errors.
{
The main contribution of this paper can be summarized as follows:
\begin{itemize}
    \item We consider a novel CVN, where an MA-enabled BS is deployed to support data services for VUs.
    To ensure robust communications under the CSI uncertainty of VUs, we formulate a worst-case VU data rate maximization problem.
    \item To solve this problem, we first convert the objective function with CSI uncertainty into an equivalent deterministic objective function with deterministic constraints, thereby transforming the stochastic optimization problem into a deterministic counterpart.
    \item Subsequently, we develop an efficient alternating optimization (AO) algorithm to solve the problem by iteratively optimizing the transmit beamforming of the BS and the positions of the MAs. Then, we analyze the convergence behavior and computational complexity of the proposed algorithm.
    \item Numerical results confirm that the proposed scheme substantially outperforms the conventional FPA scheme in worst-case throughput, offering a theoretical framework for deploying the MA-enabled systems in the next-generation CVNs.
\end{itemize}}

{
\textit{Organizations:}
The rest of this paper is structured as follows.
First, Section II provides a review of existing literature relevant to CVNs and MAs.
Then, Section III introduces the MA-enabled CVN under uncertain channel conditions and formulates the worst-case data rate maximization problem.
In Section IV, we analyze the problem’s characteristics and develop an efficient AO algorithm to solve it.
Section V presents numerical results to evaluate the performance of the proposed scheme.
Finally, Section VI summarizes the key findings and concludes the paper.
}

{
\textit{Notations:}
In this paper, we adopt the following standard mathematical notations.
The imaginary unit is denoted by $ \mathrm{j} $, satisfying $ {\mathrm{j}^2} = -1 $.
For any complex scalar $z$, $ \Re \{ z \} $ denotes its real part.
Given a matrix $ {\bf{G}} $, its conjugate transpose and ordinary transpose are represented by $ {{\bf{G}}^{\mathsf{H}}} $ and $ {{\bf{G}}^{\mathsf{T}}} $, respectively.
For a vector $ {\bf{w}} $, $ ||{\bf{w}}|| $ stands for its Euclidean.
The notation $ {\cal C}{\cal N}(\mu ,{\sigma ^2}) $ refers to a circularly symmetric complex Gaussian random variable with mean $ \mu $ and variance $ {\sigma ^2} $.
}

\section{Related Works}

{
In this section, we systematically review the state-of-the-art literature through three interconnected research directions.
We first survey recent progress in CVNs, then examine the MA-enabled wireless networks, and finally analyze the robust optimization frameworks tailored for wireless networks, with each critical to addressing the challenges of the next-generation CVNs.
}

\subsection{Recent Advances in CVNs}
{
In recent years, the rapid advancement of 6G wireless networks, mobile computing architectures, and artificial intelligence (AI) technologies has significantly empowered the transformation and upgrading of CVNs by enabling intelligent, ultra-reliable, and low-latency connectivity for the next-generation connected and autonomous vehicles.
}

{
For instance, a vehicular edge computing (VEC) system was studied \cite{9583590}, where the computation efficiency of VUs was maximized under the constraints of the computational latency and energy consumption.
In \cite{10012694}, a digital-twin-assisted VEC system was investigated, and the utility of the VUs was optimized by a deep reinforcement learning (DRL) approach.
With the integration of roadside units, the heterogeneous resource allocation of a VEC system was studied \cite{10354525}, where a long-term task computation problem was tackled by a Lyapunov-driven DRL algorithm.
With connected aerial vehicles (AVs), an integrated space-aerial-ground network was investigated \cite{10946517}, where the utility of data processing efficiency was optimized through the coordination of multi-layer resources.
To facilitate federated learning (FL) over wireless networks, an AV was dispatched to collect local models from ground devices \cite{10972043}, and the latency of each FL round was minimized for efficient model training.
A similar long-term task computation system was studied \cite{10736570}, where the authors proposed a diffusion-enhanced DRL algorithm for resource allocation.
In \cite{10980172}, a multi-AV-enabled data collection system was investigated, and the freshness of the collected data from ground devices was optimized by a novel federated DRL algorithm.
Moreover, an AV was dispatched as a low-altitude platform (LAP) \cite{11298206}, providing information transmission, target sensing, and task computation services for an intelligent CVN.
}

{
Despite recent progress, the majority of existing works on CVNs focus on single-antenna FPA systems or static multi-antenna FPA systems, which inherently limit spatial degrees of freedom (DoFs) and fail to meet the exponentially growing data rate demands of the next-generation vehicular applications, such as autonomous driving, vehicular computing, and real-time road-condition analysis.
Hence, there is an urgent need to embrace reconfigurable multi-antenna architectures, capable of dynamically adapting to high-mobility urban environments, to jointly support ultra-reliable communication, data computing offloading, and integrated sensing and communication functionalities in future CVNs.
}

\subsection{MA-Enabled Wireless Networks}
{
To overcome the limitations of the FPA architectures, recent research has turned to the MA-enabled systems to dynamically align beam patterns and enable novel DoFs for joint optimization of communication, antenna position, and resource allocation with fast-moving vehicles and mitigate channel variability in complex urban environments.
}

{
Firstly, the MA-enabled wireless networks were introduced in \cite{9264694, 9650760}, and the rate capacity was proved by the DoFs from the flexible antenna position optimization.
In addition, the full beampattern gain can be realized over the desired direction by the joint optimization of antenna weights and positions \cite{10278220}.
By combining MAs with an AV-enabled LAP, \cite{10693833}, the information transmission and target sensing were enhanced to enable efficient applications for data processing and intrusion detection.
A secure MA-enabled information and power transfer system was studied \cite{11108293}, where the energy harvesting was enhanced by the joint optimization of the transmit beamforming, antenna position, and artificial noise for ensuring physical layer security (PLS).
To further enhance PLS with uncertainty of eavesdropper, the secure rate of an MA-enabled BS was optimized \cite{11156108}.
In \cite{11240557}, the MA-mounted AV-enabled data collection system was investigated, and the user uplink rates were optimized to facilitate the data computing tasks.
Moreover, a general energy efficiency maximization problem was studied in an MA-enabled multi-user multiple-input multiple-output system, and several novel designs with low complexities were proposed \cite{11261377}.
The MA-enhanced cooperative non-orthogonal multiple access (NOMA) system was investigated \cite{11389911}, and the outage probability of end users was optimized.
}

{
While MAs have demonstrated significant potential to enhance spectral efficiency and spatial agility in future wireless networks, especially the next-generation CVNs, the majority of existing studies assume perfect CSI, which is fundamentally at odds with the high mobility, rapid channel fluctuations, and frequent blockages inherent in real-world urban vehicular scenarios.
Consequently, there is an urgent requirement to investigate the impact of CSI uncertainty on MA-enabled CVNs, and to develop robust transmission strategies that ensure reliable performance even when precise CSI is unavailable.
}

\subsection{Robust Optimization for Wireless Networks}

{
As urban CVNs evolve toward ultra-reliable, low-latency, and AI-driven services, encompassing autonomous driving, cooperative perception, and real-time edge computing, robust optimization becomes indispensable to mitigate the performance degradation caused by imperfect or outdated CSI, particularly in high-mobility, interference-rich urban environments.
}

{
Recently, robust optimization was introduced with several emerging systems.
For instance, the robust beamforming was optimized in a NOMA donwlink CVN \cite{9151971} to enhance the outage performance.
A reconfigurable intelligent surface (RIS)-assisted system was investigated \cite{9184012} with user location uncertainty, and the transmit power was minimized to ensure the worst-case transmission quality.
The secure communication system was studied \cite{9845399} with uncertain CSI and hardware impairments (HIs), where the robust resource allocation was designed to improve worst-case secure rate.
In \cite{10027173}, a dual-function radar-communication (RC) system was investigated, where the radar mismatch and CSI error were jointly addressed through robust optimization.
Moreover, the issue of uncertain CSI and HIs in a RC system was further addressed \cite{10513353}, where the sensing performance was optimized to ensure energy and data transmission requirements.
Meanwhile, some recent works have begun to address channel uncertainty in MA-enabled systems.
In \cite{11192669}, the angular uncertainty was considered in an MA-enabled downlink system, and the total energy efficiency was optimized.
Combining RIS and MAs, a novel symbiotic radio system was investigated \cite{11107317}, where the robust transmission was designed for maximizing the primary rate and ensuring the secondary rate quality.
}

{
Alghough robust optimization has proven effective in addressing the critical challenge of imperfect CSI in urban wireless systems, particularly for highly mobile CVNs, existing works that integrate MAs often assume static or quasi-static user positions, thereby neglecting the dynamic nature of vehicle mobility.
The limitation renders them ill-suited for next-generation CVNs that demand ultra-reliable, high-throughput communication to support computation-intensive applications such as autonomous driving assistance and real-time cooperative perception, motivating the urgent need for a unified robust MA-enabled framework that jointly optimizes beamforming and antenna positioning in CVNs.
}

\section{System Model and Problem Formulation}
{
In this section, we first introduce the MA-enabled CVN system under imperfect CSI of VUs. We then formulate the worst-case rate maximization problem for all VUs and analyze its characteristics.
}

\begin{figure}[t]
\centering
\fbox{\includegraphics[width=0.75\linewidth]{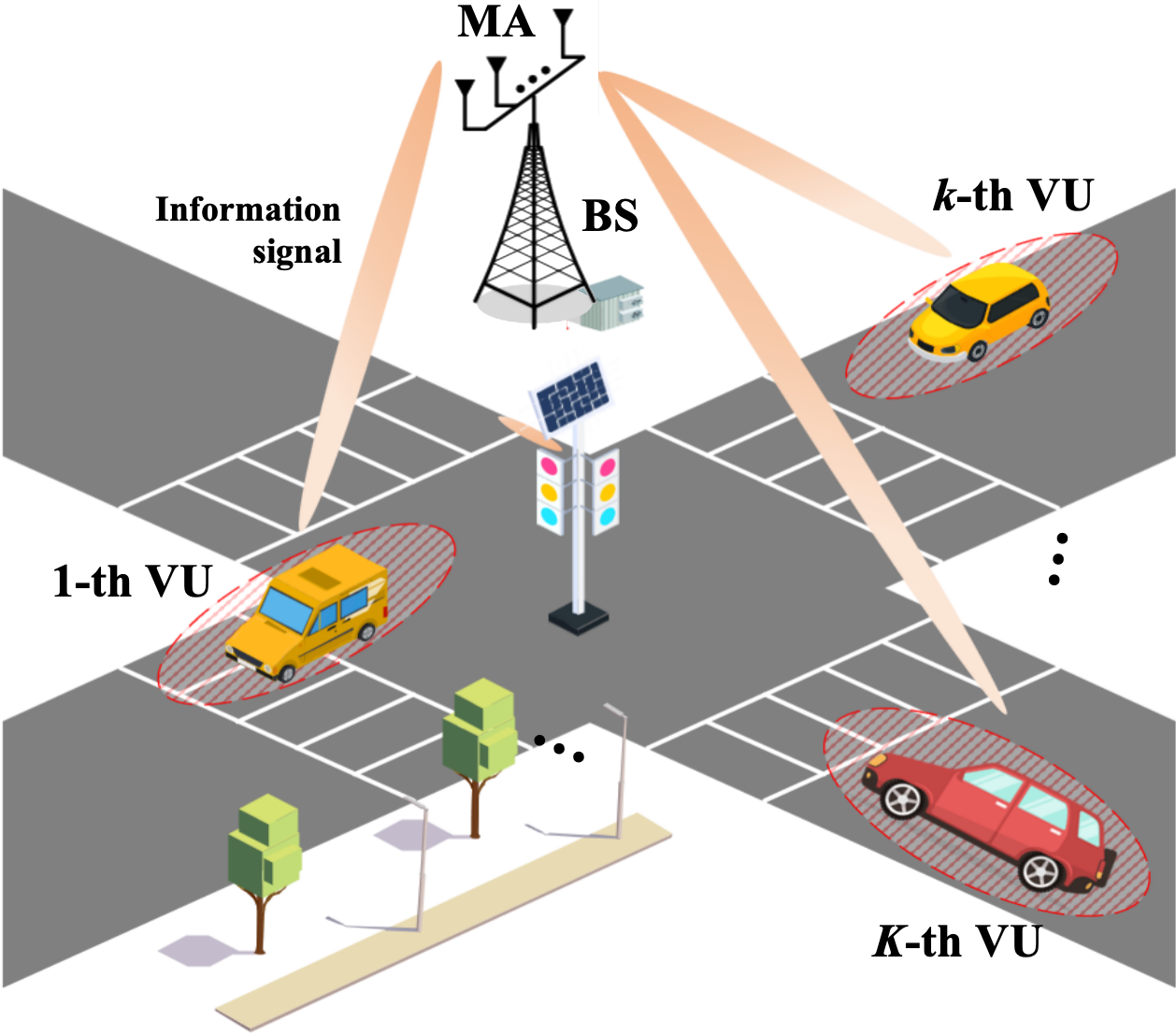}}
\caption{The MA-enabled CVN system.}
\label{sm}
\end{figure}

\subsection{System Model}

We consider a CVN system in which a BS serves $K$ VUs whose positions are uncertain over time,
{as illustrated in Fig. \ref{sm}.}
Let $\mathcal{K} = \{1,2,\ldots,K\}$ denote the index set of VUs.
The BS is equipped with a linear array comprising $M$ transmit MAs, indexed by the set $\mathcal{M} = \{1,2,\ldots,M\}$.
The total service period spans $T$ seconds, partitioned into $N$ discrete time slots, each of length $\tau = \frac{T}{N}$ seconds. The set of all time slots is represented as $\mathcal{N} = \{1,2,\ldots,N\}$.

We adopt a three-dimensional coordinate system to model the spatial positions of the BS and VUs. The BS is fixed at location $\boldsymbol{B} = \{ \boldsymbol{b}, H\}$, where $\boldsymbol{b} = \{ b_x, b_y\}$ represents its horizontal coordinates, and $H$ denotes its deployment altitude.
Throughout the service interval, VUs traverse the roadway. The position of the $k$-th VU at time slot $n$ is expressed as $\boldsymbol{Q}_k[n] = \{ \boldsymbol{q}_{k}[n], 0\}$, with $\boldsymbol{q}_{k}[n] = \{ q_k^x [n], q_k^y [n] \}$ indicating its horizontal coordinates at that instant.

Let $\boldsymbol{x}[n] = \{ x_1[n], \ldots, x_M[n] \}$ represent the position of the MA array at time slot $n$, with $x_m[n]$ indicating the position of the $m$-th MA. We impose the constraint $0 \leq x_1[n] \leq \cdots \leq x_M[n] \leq L$, where $L$ denotes the maximum allowable displacement range for the MAs.
The true steering vector from the BS to the $k$-th VU is then expressed as
\begin{equation} 
    {
    \boldsymbol{a}_k(\boldsymbol{x}[n], \theta_k[n]) = \left[
        \mathrm{e}^{\mathrm{j}\frac{2\pi}{\lambda}x_1[n]\cos\theta_k[n]},
        \ldots,
        \mathrm{e}^{\mathrm{j}\frac{2\pi}{\lambda}x_M[n]\cos\theta_k[n]}
    \right]^\mathsf{H},}
\end{equation}
where $\lambda$ is the wavelength, and $\theta_k[n]$ denotes the elevation component of the angle of arrival (AoA) towards the $k$-th VU, which is given by
\begin{equation} {
    \theta_k[n] = \arccos \frac{H}
    {\sqrt{\Vert \boldsymbol{b}-\boldsymbol{q}_k[n]\Vert^2+H^2}}.}
\end{equation}
Accordingly, the true channel vector $\boldsymbol{h}_k[n]$ from the BS to the $k$-th VU is given by
\begin{equation} {
    \boldsymbol{h}_k(\boldsymbol{x}[n], \theta_k[n]) = \frac{\sqrt{g_0}}{\sqrt{\Vert \boldsymbol{b}-\boldsymbol{q}_k[n]\Vert^2+H^2}} \boldsymbol{a}_k(\boldsymbol{x}[n], \theta_k[n]),}
\end{equation}
where $g_0$ denotes the reference path-loss. Let $s_k[n]$ and $\boldsymbol{w}_k[n]$ denote the transmit signal and the transmit beamforming from the BS to the $k$-th VU at time slot $n$, respectively. Then, the received signal at the $k$-th VU can be expressed as
\begin{equation} {
\begin{split}
    y_k(\boldsymbol{x}[n],& \theta_k[n], \boldsymbol{w}_k[n]) = \boldsymbol{h}_k^{\mathsf{H}} (\boldsymbol{x}[n], \theta_k[n]) \boldsymbol{w}_k[n] s_k[n] \\&+
    \sum_{j=1,j\neq k}^K \boldsymbol{h}_k^{\mathsf{H}} (\boldsymbol{x}[n], \theta_k[n]) \boldsymbol{w}_j[n] s_j[n] + n_k[n],
\end{split}}
\end{equation}
where $n_k \sim \mathcal{CN}(0, \sigma_k^2)$ is the additive white Gaussian noise at the $k$-th VU. As a result, the received SINR of the $k$-th VU at time slot $n$ can be written as
\begin{equation} {
    \gamma_k(\boldsymbol{x}[n], \theta_k[n], \boldsymbol{w}_k[n]) = \frac{\vert \boldsymbol{h}_k^{\mathsf{H}} (\boldsymbol{x}[n], \theta_k[n]) \boldsymbol{w}_k[n] \vert^2}{I_k(\boldsymbol{x}[n], \theta_k[n]) + \sigma_k^2},}
\end{equation}
where $I_k[n]$ denotes the interference of the signal toward other VUs transmitted by the BS, and is given by
\begin{equation} {
    I_k(\boldsymbol{x}[n], \theta_k[n]) = \sum_{j=1,j\neq k}^K \vert \boldsymbol{h}_k^{\mathsf{H}}(\boldsymbol{x}[n], \theta_k[n]) \boldsymbol{w}_j[n] \vert^2.}
\end{equation}
Then, the received data rate at the $k$-th VU at time slot $n$ is
\begin{equation} {
    r_k(\boldsymbol{x}[n], \theta_k[n], \boldsymbol{w}_k[n]) = \log_2(1+\gamma_k(\boldsymbol{x}[n], \theta_k[n], \boldsymbol{w}_k[n])).}
\end{equation}

Due to VUs moving on the road, the BS does not have perfect CSI of VUs. To model this uncertainty, we approximate the true channel as
\begin{equation} {
    \boldsymbol{h}_k(\boldsymbol{x}[n], \theta_k[n]) = \hat{\boldsymbol{h}}_k[n] + \boldsymbol{e}_k[n],
    \label{channel_model}}
\end{equation}
where $\hat{\boldsymbol{h}}_k[n]$ denotes the CSI estimated at the BS that is perfectly available, while $\boldsymbol{e}_k[n]$ captures the associated estimation error. This error is modeled under a bounded uncertainty assumption as
\begin{equation} {
    \Vert \boldsymbol{e}_k[n] \Vert \leq \xi_k, \quad \forall k \in \mathcal{K},\ \forall n \in \mathcal{N},
    \label{error_bound}}
\end{equation}
with $\xi_k$ representing the bound of the uncertainty ball for the $k$-th VU \cite{9676676}.

\subsection{Problem Formulation}
To mitigate performance degradation caused by imperfect CSI, we aim to maximize the time-averaged worst-case data rate across all VUs over the service period, by jointly optimizing the transmit beamforming $\boldsymbol{w}[n] = \{ \boldsymbol{w}_1[n], \ldots, \boldsymbol{w}_K [n]\}, \forall n$, and the position of MAs $\boldsymbol{x}[n], \forall n$. 
The corresponding optimization problem is formulated as
\begin{subequations} { 
\begin{flalign}
 (\textbf{P1}):\ \max_{\boldsymbol{w},\boldsymbol{x}} \quad & \frac{1}{N}\sum_{n= 1}^{N} \left( \min_{{\|\boldsymbol{e}_k[n]\| \leq \xi_k}} \sum_{k= 1}^{K}  r_k(\boldsymbol{x}[n], \theta_k[n], \boldsymbol{w}_k[n])  \right) \nonumber\\
 {\rm{s.t.}}  \quad & \sum_{k= 1}^{K} \Vert \boldsymbol{w}_k[n] \Vert^2 \le P_{\max},\ \forall n\in \mathcal{N},   \label{p1a}\\
 & \{x_m[n]\}_{m=1}^M \in [0,L],\ \forall n\in \mathcal{N},\label{p1b}\\
 &{ x_{p} [n] - x_{q} [n]  \geq d_{\min}, \ \text{where }\{p, q\} \in \mathcal{M}} , \nonumber\\
 &\quad\quad\quad\quad\quad\quad\quad\quad\quad\quad\quad
 { p > q,\ \forall n\in \mathcal{N} },\label{p1c}
\end{flalign}}\end{subequations}where $d_{\min}$ represents the minimum distance between two adjacent MAs.
Constraint \eqref{p1a} ensures that the transmit power of the BS cannot exceed the power upper bound $P_{\max}$.
Constraint \eqref{p1b} denotes the movable region of all MAs.
Constraint \eqref{p1c} imposes that the distance between any two adjacent MAs cannot be lower than $d_{\min}$ for avoiding the coupling effect.
\begin{remark}
    Problem \textbf{P1} is non-convex and cannot be solved by conventional methods.
    Specifically, the achievable rate is a logarithm of a ratio of quadratic forms with respect to the transmit beamforming $\boldsymbol{w}$ and the antenna position $\boldsymbol{x}$,  rendering the objective function non-convex.
    Then, the spherical uncertainty set induces a max-min structure in the objective function, which is non-convex.
    Consequently, the problem \textbf{P1} is an NP-hard non-convex optimization problem, for which no polynomial-time global solver exists.
\end{remark}

\section{Proposed Solution for Robust Optimization}
In this section, we first identify that the CSI uncertainty primarily arises from the VU position estimation errors, which are exacerbated by the high mobility of VUs. 
To tackle this challenge in CVNs, we develop a robust AO framework to jointly design the transmit beamforming and the antenna positions. 
Specifically, for given MA positions, we derive the beamforming under channel uncertainty constraint. 
Subsequently, with beamforming fixed, we optimize the positions of MAs to strengthen system resilience against positional fluctuations.
{
Finally, we review the overall algorithm, and analyze the convergence behavior and computational complexity of the proposed algorithm to show the deployment feasibility.
}

\subsection{Problem Analysis}

The BS employs localization techniques, such as the extended Kalman filter, to estimate the positions of VUs. 
Let $\hat{\boldsymbol{q}}_k[n]$ denote the estimated location of the $k$-th VU at time slot $n$, subject to a bounded positioning error $\Vert \boldsymbol{q}_k[n] - \hat{\boldsymbol{q}}_k[n] \Vert \leq r_k$, where $r_k$ represents the bound of position error.
Accordingly, the corresponding estimated channel is expressed as
\begin{equation} {
    \hat{\boldsymbol{h}}_k[n]=\frac{\sqrt{g_0}}{\sqrt{
        \Vert \boldsymbol{b}-\hat{\boldsymbol{q}}_k[n]\Vert^2+H^2
    }} \hat{\boldsymbol{a}}_k(\boldsymbol{x}[n], \hat{\theta}_k[n]),
    \label{eq:estimated_channel}}
\end{equation}
where $\hat{\theta}_k[n] = \arccos \frac{H}
    {\sqrt{\Vert \boldsymbol{b}-\hat{\boldsymbol{q}}_k[n]\Vert^2+H^2}}$ denotes the estimated AoA.
We next introduce the following proposition to characterize the resulting channel estimation error bound.
\begin{proposition}
    The estimation error $\boldsymbol{e}_k[n] $ towards the $k$-th VU at time slot $n$ is bounded by
    \begin{equation} {
    \begin{split}
        \Vert \boldsymbol{e}_k[n] \Vert &\leq \frac{\sqrt{Mg_0}r_k}{d_k[n]\hat{d}_k[n]}
        \left(
        1+\frac{2\pi L H}{\lambda \hat{d}_k[n]}
        \right)  = \xi_k.
    \label{propek}
    \end{split}}
    \end{equation}
    \label{propno1}
\end{proposition}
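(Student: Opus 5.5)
We write $d_k[n]=\sqrt{\Vert \boldsymbol{b}-\boldsymbol{q}_k[n]\Vert^2+H^2}$ and $\hat d_k[n]=\sqrt{\Vert \boldsymbol{b}-\hat{\boldsymbol{q}}_k[n]\Vert^2+H^2}$ for the true and estimated distances. We drop the slot index $n$ and set $c=\sqrt{g_0}$.

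The plan is to split the error $\boldsymbol{e}_k=\boldsymbol{h}_k-\hat{\boldsymbol{h}}_k$ into a path-loss part and a phase part by adding and subtracting $\frac{c}{d_k}\hat{\boldsymbol{a}}_k$:
\begin{equation*}
\boldsymbol{e}_k=\Big(\frac{c}{d_k}-\frac{c}{\hat d_k}\Big)\hat{\boldsymbol{a}}_k+\frac{c}{d_k}\big(\boldsymbol{a}_k-\hat{\boldsymbol{a}}_k\big).
\end{equation*}
We then bound the two terms separately using the triangle inequality.

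\textbf{Path-loss term.} Since $\Vert\hat{\boldsymbol{a}}_k\Vert=\sqrt M$, this term has norm $\sqrt M\,c\,|\hat d_k-d_k|/(d_k\hat d_k)$. The map $\boldsymbol{q}\mapsto\sqrt{\Vert\boldsymbol{b}-\boldsymbol{q}\Vert^2+H^2}$ is $1$-Lipschitz, because it is the Euclidean norm of $(\boldsymbol{b}-\boldsymbol{q},H)$ and the reverse triangle inequality applies. Hence $|d_k-\hat d_k|\le\Vert\boldsymbol{q}_k-\hat{\boldsymbol{q}}_k\Vert\le r_k$, and the first term is at most $\frac{\sqrt{Mg_0}\,r_k}{d_k\hat d_k}$. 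This gives the leading ``$1$'' in \eqref{propek}.

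\textbf{Phase term.} Each entry of $\boldsymbol{a}_k-\hat{\boldsymbol{a}}_k$ has modulus $|\mathrm{e}^{\mathrm{j}\phi}-\mathrm{e}^{\mathrm{j}\hat\phi}|\le|\phi-\hat\phi|$, where $\phi-\hat\phi=\frac{2\pi}{\lambda}x_m(\cos\theta_k-\cos\hat\theta_k)$. Using $x_m\le L$ gives
\begin{equation*}
\Vert\boldsymbol{a}_k-\hat{\boldsymbol{a}}_k\Vert\le\sqrt M\,\frac{2\pi L}{\lambda}\,|\cos\theta_k-\cos\hat\theta_k|.
\end{equation*}
By definition, $\cos\theta_k=H/d_k$, so $|\cos\theta_k-\cos\hat\theta_k|=H|\hat d_k-d_k|/(d_k\hat d_k)\le Hr_k/(d_k\hat d_k)$. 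Combining with the prefactor $c/d_k$, the second term is at most
\begin{equation*}
\frac{\sqrt{Mg_0}\,r_k}{d_k\hat d_k}\cdot\frac{2\pi LH}{\lambda d_k}.
\end{equation*}

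\textbf{Main obstacle.} The target bound \eqref{propek} has $\hat d_k$ in the bracket, whereas my phase bound has $d_k$ there. I would handle this by choosing the other add-and-subtract order: insert $\frac{c}{\hat d_k}\boldsymbol{a}_k$ instead, so that
\begin{equation*}
\boldsymbol{e}_k=\Big(\frac{c}{d_k}-\frac{c}{\hat d_k}\Big)\boldsymbol{a}_k+\frac{c}{\hat d_k}\big(\boldsymbol{a}_k-\hat{\boldsymbol{a}}_k\big).
\end{equation*}
The first term keeps the same bound because $\Vert\boldsymbol{a}_k\Vert=\sqrt M$ as well. The phase term now carries $c/\hat d_k$ and becomes
\begin{equation*}
\frac{\sqrt{Mg_0}\,r_k}{d_k\hat d_k}\cdot\frac{2\pi LH}{\lambda\hat d_k},
\end{equation*}
which matches \eqref{propek} exactly. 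Summing the two bounds yields \eqref{propek}, and defining $\xi_k$ as this quantity satisfies \eqref{error_bound}.

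Two caveats remain. First, \eqref{propek} depends on the true distance $d_k[n]$. If a computable bound is needed, one can replace $d_k$ by the lower bound $\hat d_k-r_k$ when $\hat d_k>r_k$; I would note this but not require it. Second, the steering vector's conjugate-transpose convention only conjugates every entry, which leaves all moduli unchanged and so does not affect the argument.
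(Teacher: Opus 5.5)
Your final argument is correct and matches the paper's proof. The paper also inserts the intermediate term $\frac{\sqrt{g_0}}{\hat d_k[n]}\boldsymbol{a}_k$, bounds the path-loss part via $|d_k[n]-\hat d_k[n]|\le r_k$, and bounds the steering-vector difference by $\sqrt{M}\frac{2\pi LHr_k}{\lambda d_k[n]\hat d_k[n]}$; your entrywise estimate $|\mathrm{e}^{\mathrm{j}\phi}-\mathrm{e}^{\mathrm{j}\hat\phi}|\le|\phi-\hat\phi|$ just makes explicit the phase-error step that the paper states only informally.
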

\begin{proof}
{
Please refer to Appendix \ref{apppro1}.
}
\end{proof}

\subsection{Problem Transformation}
To transform the \textsc{max-min} form of the objective function of \textbf{P1}, we introduce the auxiliary variables $\tilde{\gamma}_k[n], \forall k, n$. Then, problem \textbf{P2} can be reformulated as
\begin{subequations} { 
\begin{flalign}
 (\textbf{P2}):\ \max_{\boldsymbol{w},\boldsymbol{x}} \quad & \frac{1}{N} \sum_{n= 1}^{N}\sum_{k= 1}^{K}  \tilde{r}_k[n]  \nonumber\\
 {\rm{s.t.}}  \quad & \min_{\boldsymbol{e}_k[n]} \gamma_k(\boldsymbol{x}[n], \theta_k[n], \boldsymbol{w}_k[n]) \geq \tilde{\gamma}_k[n],\nonumber\\
 &\quad \quad \quad \quad \quad \quad \quad \forall k \in \mathcal{K}, \ 
 \forall n \in \mathcal{N},\label{p21a}\\
 &\eqref{p1a}\text{-}\eqref{p1c},\nonumber
\end{flalign}}\end{subequations}where $\tilde{r}_k[n] = \log_2(1+\tilde{\gamma}_k[n])$, and constraint \eqref{p21a} ensures the reliable communication under the worst-case channel estimation error, which can be expanded as
\begin{equation} {
\begin{split}
    \min_{\|\boldsymbol{e}_k[n]\| \leq \xi_k} \frac{\vert \boldsymbol{h}_k^{\mathsf{H}} (\boldsymbol{x}[n], \theta_k[n]) \boldsymbol{w}_k[n] \vert^2}{\sum_{j\neq k}^K \vert \boldsymbol{h}_k^{\mathsf{H}}(\boldsymbol{x}[n], \theta_k[n]) \boldsymbol{w}_j[n] \vert^2 + \sigma_k^2} \geq \tilde{\gamma}_k[n], \\
    \forall k \in \mathcal{K}, \ 
 \forall n \in \mathcal{N}.
\end{split}
\label{channelexpand}}
\end{equation}
Substituting \eqref{channel_model} into \eqref{channelexpand}, we can obtain
\begin{equation} {
\begin{split}
    \min_{\|\boldsymbol{e}_k[n]\| \leq \xi_k} \frac{|(\hat{\boldsymbol{h}}_k [n] + \boldsymbol{e}_k[n])^\mathsf{H} \boldsymbol{w}_k[n]|^2}{\sum_{j \neq k}^K |(\hat{\boldsymbol{h}}_k[n] + \boldsymbol{e}_k[n])^\mathsf{H} \boldsymbol{w}_j[n]|^2 + \sigma_k^2} \geq \tilde{\gamma}_k[n],\\
    \forall k \in \mathcal{K}, \ 
 \forall n \in \mathcal{N}.
\end{split}
\label{channelexpand2}}
\end{equation}
To transform \eqref{channelexpand2}, we introduce the following slack variables $\boldsymbol{\alpha} [n] = \{ \alpha_1[n], \ldots,\alpha_K[n]\}, \forall n$ and $\boldsymbol{\beta} [n] = \{ \beta_1[n], \ldots,\beta_K[n]\}, \forall n$. Then, we can obtain
\begin{equation} {
\begin{split}
    |(\hat{\boldsymbol{h}}_k [n] + \boldsymbol{e}_k[n])^\mathsf{H} \boldsymbol{w}_k[n]|^2 \geq \alpha_k[n],\\
     \|\boldsymbol{e}_k[n]\| \leq \xi_k,\ \forall k \in \mathcal{K}, \ 
 \forall n \in \mathcal{N},
\end{split}
\label{uncertainties1}}
\end{equation}
\begin{equation} {
\begin{split}
    \sum_{j \neq k}^K |(\hat{\boldsymbol{h}}_k[n] + \boldsymbol{e}_k[n])^\mathsf{H} \boldsymbol{w}_j[n]|^2 + \sigma_k^2 \leq
    \beta_k [n],    \\
     \|\boldsymbol{e}_k[n]\| \leq \xi_k,\ \forall k \in \mathcal{K}, \ 
 \forall n \in \mathcal{N},
\end{split}
\label{uncertainties2}}
\end{equation}
\begin{equation} {
    \frac{\alpha_k[n]}{\beta_k[n]} \geq \tilde{\gamma}_k[n],\ \forall k \in \mathcal{K}, \ 
 \forall n \in \mathcal{N}.
 \label{ineq1}}
\end{equation}
To transform the uncertainties in \eqref{uncertainties1} and \eqref{uncertainties2}, we introduce the following Lemma \cite{9676676}.
\begin{lemma}
    (S-Procedure): Let $\boldsymbol{e} \in \mathbb{C}^N $ be a complex vector, and consider the quadratic function
    \begin{equation} {
        f(\boldsymbol{e}) = \boldsymbol{e}^\mathsf{H} \boldsymbol{A} \boldsymbol{e} + 2\Re\left\{\boldsymbol{b}^\mathsf{H} \boldsymbol{e}\right\} + c,}
    \end{equation}
    where \( \boldsymbol{A} \in \mathbb{H}^{N \times N} \), \( \boldsymbol{b} \in \mathbb{C}^N \), and \( c \in \mathbb{R} \). If there exists a scalar \( \lambda \geq 0 \) such that the following block matrix is positive semi-definite, as
    \begin{equation} {
        \begin{bmatrix} \boldsymbol{A} + \lambda \boldsymbol{I} & \boldsymbol{b} \\ \boldsymbol{b}^\mathsf{H} & c - \lambda \xi^2 \end{bmatrix} \succeq \boldsymbol{0},}
    \end{equation}
then \( f(\boldsymbol{e}) \geq 0 \) holds for all \( \boldsymbol{e} \) satisfying \( \|\boldsymbol{e}\| \leq \xi \), where \( \xi > 0 \) is a given constant.
\label{lemmas}
\end{lemma}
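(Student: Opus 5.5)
The plan is to prove the sufficiency direction only, which is all Lemma \ref{lemmas} asserts. The route is a Lagrangian-type (weak duality) argument rather than the full S-lemma. Fix any $\boldsymbol{e}\in\mathbb{C}^N$ with $\|\boldsymbol{e}\|\le\xi$, and let $\lambda\ge 0$ be the scalar for which the block matrix is positive semidefinite.

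First, test the positive semidefinite block matrix against the stacked vector $\boldsymbol{v}=[\boldsymbol{e}^{\mathsf{T}},\,1]^{\mathsf{T}}\in\mathbb{C}^{N+1}$. Expanding the quadratic form gives
\begin{equation*}
\boldsymbol{v}^{\mathsf{H}}\begin{bmatrix} \boldsymbol{A} + \lambda \boldsymbol{I} & \boldsymbol{b} \\ \boldsymbol{b}^\mathsf{H} & c - \lambda \xi^2 \end{bmatrix}\boldsymbol{v}
= \boldsymbol{e}^{\mathsf{H}}\boldsymbol{A}\boldsymbol{e} + \lambda\|\boldsymbol{e}\|^2 + \boldsymbol{e}^{\mathsf{H}}\boldsymbol{b} + \boldsymbol{b}^{\mathsf{H}}\boldsymbol{e} + c - \lambda\xi^2 \ge 0 .
\end{equation*}
Since $\boldsymbol{e}^{\mathsf{H}}\boldsymbol{b}+\boldsymbol{b}^{\mathsf{H}}\boldsymbol{e}=2\Re\{\boldsymbol{b}^{\mathsf{H}}\boldsymbol{e}\}$, the left-hand side equals $f(\boldsymbol{e})+\lambda(\|\boldsymbol{e}\|^2-\xi^2)$. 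Hence $f(\boldsymbol{e})\ge \lambda(\xi^2-\|\boldsymbol{e}\|^2)$.

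Second, use the sign information. We have $\lambda\ge0$, and $\xi^2-\|\boldsymbol{e}\|^2\ge 0$ on the ball. Their product is therefore nonnegative, so $f(\boldsymbol{e})\ge 0$ for every admissible $\boldsymbol{e}$, which is the claim.

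The only point needing care is the cross term. Because $\boldsymbol{A}$ is Hermitian and the matrix is Hermitian with off-diagonal blocks $\boldsymbol{b}$ and $\boldsymbol{b}^{\mathsf{H}}$, the off-diagonal contributions combine into exactly $2\Re\{\boldsymbol{b}^{\mathsf{H}}\boldsymbol{e}\}$, matching $f$. No real obstacle arises, since the converse (necessity, which needs strict feasibility and the S-lemma's lossless property) is not asserted here. If needed, one could remark that necessity also holds because $\boldsymbol{e}=\boldsymbol{0}$ is strictly feasible, but this is not required for the lemma as stated.
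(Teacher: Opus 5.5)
Your proof is correct and complete for the lemma as stated. Testing the block matrix against $[\boldsymbol{e}^{\mathsf{T}},1]^{\mathsf{T}}$ gives $f(\boldsymbol{e})\ge\lambda(\xi^2-\|\boldsymbol{e}\|^2)\ge 0$ on the ball.

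The paper gives no argument of its own here. It imports the classical S-procedure from \cite{9676676}, which is an \emph{equivalence}: it is lossless because $\boldsymbol{e}=\boldsymbol{0}$ is a Slater point when $\xi>0$. The difference between your route and the paper's is therefore one of scope:
\begin{itemize}
\item Your self-contained weak-duality check proves exactly the sufficiency the lemma asserts. That is enough to guarantee that any solution of \textbf{P2-1} satisfies the robust constraint \eqref{uncertainties3} for every admissible error.
\item The cited S-lemma also supplies necessity. The paper tacitly relies on this when it says \eqref{uncertainties3} ``can be written as'' \eqref{LMI1}, i.e., that the LMI reformulation is exact rather than merely conservative.
\end{itemize}

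If you want your closing remark on necessity to carry that weight, strict feasibility of $\boldsymbol{e}=\boldsymbol{0}$ alone does not suffice. You need two further ingredients. First, the lossless S-lemma for a single quadratic constraint, which is a hidden-convexity result rather than an elementary computation. Second, the fact that the quadratic $\boldsymbol{e}^{\mathsf{H}}\boldsymbol{P}\boldsymbol{e}+2\Re\{\boldsymbol{q}^{\mathsf{H}}\boldsymbol{e}\}+r$ is nonnegative on all of $\mathbb{C}^N$ if and only if $\left[\begin{smallmatrix}\boldsymbol{P} & \boldsymbol{q}\\ \boldsymbol{q}^{\mathsf{H}} & r\end{smallmatrix}\right]\succeq\boldsymbol{0}$. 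This follows by homogenizing with $[\boldsymbol{y}^{\mathsf{T}},t]^{\mathsf{T}}$ for $t\neq 0$ and using a limiting argument for $t=0$.
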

Then, we can rewrite \eqref{uncertainties1} as
\begin{equation} {
\begin{split}
    &\boldsymbol{e}_k^\mathsf{H}[n] \boldsymbol{w}_k^\mathsf{}[n]
    \boldsymbol{w}_k^\mathsf{H}[n]
    \boldsymbol{e}_k^\mathsf{}[n] +
    2\Re \{ \hat{\boldsymbol{h}}_k^\mathsf{H}[n]
    \boldsymbol{w}_k^\mathsf{}[n]
    \boldsymbol{w}_k^\mathsf{H}[n]
    \boldsymbol{e}_k^\mathsf{}[n] \} \\ &+
    \hat{\boldsymbol{h}}_k^\mathsf{H}[n]
    \boldsymbol{w}_k^\mathsf{}[n]
    \boldsymbol{w}_k^\mathsf{H}[n]\hat{\boldsymbol{h}}_k^\mathsf{}[n]
    -\alpha_k[n] \geq 0.
\end{split}
\label{uncertainties3}}
\end{equation}
According to Lemma \ref{lemmas}, the inequality \eqref{uncertainties3} can be written as
\begin{equation} {
    \begin{bmatrix} \boldsymbol{w}_k^\mathsf{}[n]
    \boldsymbol{w}_k^\mathsf{H}[n] + \lambda_k[n]\boldsymbol{I} &  \boldsymbol{w}_k^\mathsf{}[n]
    \boldsymbol{w}_k^\mathsf{H}[n] \hat{\boldsymbol{h}}_k^\mathsf{}[n] \\ \hat{\boldsymbol{h}}_k^\mathsf{H}[n]\boldsymbol{w}_k^\mathsf{}[n]
    \boldsymbol{w}_k^\mathsf{H}[n] & \kappa_k[n]  \end{bmatrix}
        \succeq \boldsymbol{0},
        \label{LMI1}}
\end{equation}
where $\boldsymbol{\lambda}[n] = \{ \lambda_1[n],\ldots,\lambda_K [n]\}, \forall n$ are the introduced slack variables, and
\begin{equation} {
    \kappa_k[n] = \hat{\boldsymbol{h}}_k^\mathsf{H}[n]
    \boldsymbol{w}_k^\mathsf{}[n]
    \boldsymbol{w}_k^\mathsf{H}[n]\hat{\boldsymbol{h}}_k^\mathsf{}[n] - \alpha_k[n] - \lambda_k[n]\xi_k^2.
    \label{kappa}}
\end{equation}
To handle \eqref{uncertainties2}, we introduce the following Lemma \cite{10666854}.
\begin{lemma}
        (Schur Complement):
        For a block matrix \( \boldsymbol{M} \in \mathbb{C}^{(N+M) \times (N+M)} \) defined as
        \begin{equation} {
            \boldsymbol{M} = \begin{bmatrix} \boldsymbol{A} & \boldsymbol{B} \\ \boldsymbol{B}^\mathsf{H} & \boldsymbol{C} \end{bmatrix},}
        \end{equation}
where
\( \boldsymbol{A} \in \mathbb{C}^{N \times N} \) is Hermitian and invertible,
\( \boldsymbol{C} \in \mathbb{C}^{M \times M} \) is Hermitian,
and \( \boldsymbol{B} \in \mathbb{C}^{N \times M} \), $\boldsymbol{M} \succeq \boldsymbol{0}$ if and only if $\boldsymbol{A} \succeq \boldsymbol{0}$ and $\boldsymbol{C} - \boldsymbol{B}^\mathsf{H} \boldsymbol{A}^{-1} \boldsymbol{B} \succeq \boldsymbol{0}$.
\label{lemma2}
    \end{lemma}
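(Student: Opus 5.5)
The plan is to prove Lemma \ref{lemma2} by a block Gaussian elimination, i.e., a congruence transformation that block-diagonalizes $\boldsymbol{M}$. Positive semi-definiteness is preserved under congruence by an invertible matrix, so the claim reduces to positive semi-definiteness of the two diagonal blocks. Throughout, let $\boldsymbol{S} = \boldsymbol{C} - \boldsymbol{B}^\mathsf{H}\boldsymbol{A}^{-1}\boldsymbol{B}$ denote the Schur complement.

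First, I will introduce the block upper-triangular matrix
\begin{equation*}
\boldsymbol{T} = \begin{bmatrix} \boldsymbol{I}_N & -\boldsymbol{A}^{-1}\boldsymbol{B} \\ \boldsymbol{0} & \boldsymbol{I}_M \end{bmatrix}.
\end{equation*}
It is invertible, with inverse obtained by flipping the sign of the off-diagonal block. Since $\boldsymbol{A}$ is Hermitian and invertible, $\boldsymbol{A}^{-1}$ is also Hermitian, so
\begin{equation*}
\boldsymbol{T}^\mathsf{H} = \begin{bmatrix} \boldsymbol{I}_N & \boldsymbol{0} \\ -\boldsymbol{B}^\mathsf{H}\boldsymbol{A}^{-1} & \boldsymbol{I}_M \end{bmatrix}.
\end{equation*}
A direct block multiplication then gives
\begin{equation*}
\boldsymbol{M}\boldsymbol{T} = \begin{bmatrix} \boldsymbol{A} & \boldsymbol{0} \\ \boldsymbol{B}^\mathsf{H} & \boldsymbol{S} \end{bmatrix}
\quad\text{and}\quad
\boldsymbol{T}^\mathsf{H}\boldsymbol{M}\boldsymbol{T} = \begin{bmatrix} \boldsymbol{A} & \boldsymbol{0} \\ \boldsymbol{0} & \boldsymbol{S} \end{bmatrix}.
\end{equation*}
In the second product the off-diagonal entry $-\boldsymbol{B}^\mathsf{H}\boldsymbol{A}^{-1}\boldsymbol{A} + \boldsymbol{B}^\mathsf{H}$ vanishes.

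Second, I will argue that congruence preserves semi-definiteness. For any $\boldsymbol{z} \in \mathbb{C}^{N+M}$, set $\boldsymbol{y} = \boldsymbol{T}^{-1}\boldsymbol{z}$. Then $\boldsymbol{z}^\mathsf{H}\boldsymbol{M}\boldsymbol{z} = \boldsymbol{y}^\mathsf{H}(\boldsymbol{T}^\mathsf{H}\boldsymbol{M}\boldsymbol{T})\boldsymbol{y}$. Because $\boldsymbol{z} \mapsto \boldsymbol{y}$ is a bijection of $\mathbb{C}^{N+M}$, it follows that $\boldsymbol{M} \succeq \boldsymbol{0}$ if and only if $\mathrm{diag}(\boldsymbol{A}, \boldsymbol{S}) \succeq \boldsymbol{0}$.

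Third, I will show that a block-diagonal Hermitian matrix is positive semi-definite if and only if each block is. Writing $\boldsymbol{y} = [\boldsymbol{u}^\mathsf{T}, \boldsymbol{v}^\mathsf{T}]^\mathsf{T}$, the quadratic form splits as $\boldsymbol{u}^\mathsf{H}\boldsymbol{A}\boldsymbol{u} + \boldsymbol{v}^\mathsf{H}\boldsymbol{S}\boldsymbol{v}$. Sufficiency is then immediate. For necessity, take $\boldsymbol{v} = \boldsymbol{0}$, and respectively $\boldsymbol{u} = \boldsymbol{0}$. This yields exactly $\boldsymbol{A} \succeq \boldsymbol{0}$ and $\boldsymbol{S} \succeq \boldsymbol{0}$.

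There is no substantive obstacle here. The only points needing care are the following:
\begin{itemize}
\item Using the Hermitian property of $\boldsymbol{A}^{-1}$ so that $\boldsymbol{T}^\mathsf{H}$ has the stated form and the off-diagonal blocks cancel exactly.
\item Noting that $\boldsymbol{S}$ is Hermitian, since $\boldsymbol{C}$ and $\boldsymbol{B}^\mathsf{H}\boldsymbol{A}^{-1}\boldsymbol{B}$ are, so the semi-definiteness statements are meaningful.
\item Remarking that, under the invertibility assumption, the condition $\boldsymbol{A} \succeq \boldsymbol{0}$ is in fact equivalent to $\boldsymbol{A} \succ \boldsymbol{0}$.
\end{itemize}
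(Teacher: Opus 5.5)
Your proof is correct and complete. The three steps give both directions: the congruence $\boldsymbol{T}^\mathsf{H}\boldsymbol{M}\boldsymbol{T}=\mathrm{diag}(\boldsymbol{A},\boldsymbol{S})$, the invariance of semi-definiteness under an invertible congruence, and the splitting of the block-diagonal quadratic form. Your remark that $\boldsymbol{A}\succeq\boldsymbol{0}$ together with invertibility forces $\boldsymbol{A}\succ\boldsymbol{0}$ is the right reading of the hypothesis.

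The paper does not prove this lemma. It quotes it as a standard fact with a citation to \cite{10666854}, so there is no competing argument to compare against, and your block Gaussian elimination is the standard textbook proof of that cited result.

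One optional addition concerns how the lemma is actually used in \eqref{uncertainties4}. There, condition \eqref{uncertainties2} is literally the Schur complement of the identity block in the lower-right corner. The most direct form to invoke is therefore the mirrored one: $\boldsymbol{M}\succeq\boldsymbol{0}$ if and only if $\boldsymbol{C}\succ\boldsymbol{0}$ and $\boldsymbol{A}-\boldsymbol{B}\boldsymbol{C}^{-1}\boldsymbol{B}^\mathsf{H}\succeq\boldsymbol{0}$. Your argument proves this verbatim once you replace $\boldsymbol{T}$ by the block lower-triangular matrix $\begin{bmatrix}\boldsymbol{I} & \boldsymbol{0}\\ -\boldsymbol{C}^{-1}\boldsymbol{B}^\mathsf{H} & \boldsymbol{I}\end{bmatrix}$.
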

Then, according to Lemma \ref{lemma2}, we can rewrite \eqref{uncertainties2} as
\begin{equation} {
\begin{split}
    \begin{bmatrix}
\beta_k [n] - \sigma_k^2 & \left( \hat{\boldsymbol{h}}_k^\mathsf{H} [n]  +  \boldsymbol{e}_k^\mathsf{H} [n] \right) \boldsymbol{W}_{-{k}}[n]  \\
\boldsymbol{W}_{-{k}}^\mathsf{H} [n] \left( \hat{\boldsymbol{h}}_k [n]  +  \boldsymbol{e}_k [n] \right) & \boldsymbol{I}
\end{bmatrix} \\ \succeq \boldsymbol{0}, \quad\quad
    \forall \|\boldsymbol{e}_k[n]\| \leq \xi_k,\ \forall k \in \mathcal{K}, \ 
 \forall n \in \mathcal{N},
 \label{uncertainties4}
\end{split}}
\end{equation}where 
\begin{equation} {
    \boldsymbol{W}_{-k} [n] = \left [
        \boldsymbol{w}_1[n],\ldots,
        \boldsymbol{w}_{k-1}[n],
        \boldsymbol{w}_{k+1}[n], \ldots,
        \boldsymbol{w}_K[n]
    \right ].}
\end{equation}
Then, we introduce the slack variables $\boldsymbol{\varpi} [n] = \{ \varpi_1 [n],\ldots, \varpi_K [n]\}, \forall n$. Accordingly, we can equivalently transform \eqref{uncertainties4} into the following linear matrix inequality (LMI) \cite{9110587}
\begin{equation} {
    \begin{bmatrix}
\beta_k [n] - \sigma_k^2 - \varpi_k [n] & \hat{\boldsymbol{h}}_k^\mathsf{H} [n] \boldsymbol{W}_{-{k}} [n] & \boldsymbol{0}_{} \\
\boldsymbol{W}_{-{k}}^\mathsf{H} [n] \hat{\boldsymbol{h}}_k [n] & \boldsymbol{I}_{} & \xi_k \boldsymbol{W}_{-{k}}^\mathsf{H} [n] \\
\boldsymbol{0}_{} & \xi_k \boldsymbol{W}_{-{k}} [n]& \varpi_k [n] \boldsymbol{I}_{}
\end{bmatrix} \succeq \boldsymbol{0}.
\label{LMI2}}
\end{equation}

Then, \eqref{ineq1} can be rewritten as
\begin{equation} {
    {\alpha_k[n]} \geq {\beta_k[n]}\tilde{\gamma}_k[n],\ \forall k \in \mathcal{K}, \ 
 \forall n \in \mathcal{N}.
 \label{ineq2}}
\end{equation}
To transform \eqref{ineq2}, we introduce the following Lemma.
\begin{lemma} 
\label{lemmaup}
Given any two positive terms $x_{i}$ and $x_{j}$, the upper bound for $x_{i}x_{j}$ can be expressed as
\begin{equation} {
x_{i} x_{j} \leq \frac{1}{2} \left( \frac{x_{j}^{(l)}}{x_{i}^{(l)}} x_{i}^2 + \frac{x_{i}^{(l)}}{x_{j}^{(l)}} x_{j}^2 \right),}
\end{equation}
where $x_{i}^{(l)}$ and $x_{j}^{(l)}$ are the given local points in the $l$-th iteration of successive convex approximation (SCA).
\begin{proof}
	Please refer to \cite{7547360}.
\end{proof}
\end{lemma}
According to Lemma \ref{lemmaup}, we can obtain
\begin{equation} {
    {\alpha_k[n]} \geq \frac{1}{2} \left(
        \frac{\beta_k^{(l)}[n]}{\tilde{\gamma}_k^{(l)}[n]} \tilde{\gamma}_k^2 [n]+ \frac{\tilde{\gamma}_k^{(l)}[n]}{\beta_k^{(l)}[n]} \beta_k^2[n]
    \right).
     \label{ineq3}}
\end{equation}
As a result, problem \textbf{P2} can be reformulated as
\begin{subequations} { 
\begin{flalign}
 (\textbf{P2-1}):\ \max_{\boldsymbol{w},\boldsymbol{x}} \quad &  \frac{1}{N}\sum_{n= 1}^{N}\sum_{k= 1}^{K}  \tilde{r}_k[n]  \nonumber\\
 {\rm{s.t.}}  \quad & \eqref{p1a}\text{-}\eqref{p1c},\eqref{LMI1},\eqref{LMI2},\text{ and }\eqref{ineq3}.\nonumber
\end{flalign}}\end{subequations}

\subsection{Beamforming Design}
When given the antenna position $\boldsymbol{x}[n], \forall n$, problem \textbf{P2-1} can be simplified as
\begin{subequations} { 
\begin{flalign}
 (\textbf{P3}):\ \max_{\boldsymbol{w}} \quad &  \frac{1}{N} \sum_{n= 1}^{N}\sum_{k= 1}^{K}  \tilde{r}_k[n]  \nonumber\\
 {\rm{s.t.}}  \quad & \eqref{p1a},\eqref{LMI1},\eqref{LMI2},\text{ and }\eqref{ineq3}.\nonumber
\end{flalign}}\end{subequations}
Constraint \eqref{LMI1} is non-convex with respect to $\boldsymbol{w}_k[n]$. To tackle this, we approximiate $\boldsymbol{w}_k[n]\boldsymbol{w}_k^\mathsf{H}[n]$ by the first-order Taylor expansion of the SCA as
\begin{equation} {
\begin{split}
    \boldsymbol{w}_k[n]\boldsymbol{w}_k^\mathsf{H}[n] \succeq&\
    \boldsymbol{w}_k[n]{\left(\boldsymbol{w}_k^{(l)}\right)}^{\mathsf{H}} [n] \\&+
    \boldsymbol{w}_k^{(l)}[n]\boldsymbol{w}_k^\mathsf{H} -
    \boldsymbol{w}_k^{(l)}[n]{\left(\boldsymbol{w}_k^{(l)}\right)}^{\mathsf{H}} [n] \\
    &\triangleq \check{W}_k[n].
\end{split}}
\end{equation}
Substituting $\check{W}_k[n]$ into \eqref{LMI1}, we can obtain the following LMI
\begin{equation} {
    \begin{bmatrix} \check{W}_k[n] + \lambda_k[n]\boldsymbol{I} &  \check{W}_k[n] \hat{\boldsymbol{h}}_k^\mathsf{}[n] \\ \hat{\boldsymbol{h}}_k^\mathsf{H}[n]\check{W}_k[n] & \check{\kappa}_k[n]  \end{bmatrix}
        \succeq \boldsymbol{0},
    \label{lb1}}
\end{equation}
where
\begin{equation} {
    \check{\kappa}_k[n]  = \hat{\boldsymbol{h}}_k^\mathsf{H}[n]
    \check{W}_k[n][n]\hat{\boldsymbol{h}}_k^\mathsf{}[n] - \alpha_k[n] - \lambda_k[n]\xi_k^2.}
\end{equation}
Then, problem \textbf{P3} can be reformulated as
\begin{subequations} { 
\begin{flalign}
 (\textbf{P3-1}):\ \max_{\boldsymbol{w}} \quad &  \frac{1}{N}\sum_{n= 1}^{N}\sum_{k= 1}^{K}  \tilde{r}_k[n]  \nonumber\\
 {\rm{s.t.}}  \quad & \eqref{p1a},\eqref{LMI2},\eqref{ineq3},\text{ and }\eqref{lb1}.\nonumber
\end{flalign}}\end{subequations}
Problem \textbf{P3-1} is a standard convex optimization problem, which can be efficiently solved via the interior-point method (IPM) using convex optimization solvers such as CVX.

\subsection{Antenna Position Optimization}
When given the transmit beamforming $\boldsymbol{w}[n], \forall n$, problem \textbf{P2-1} can be reformulated as
\begin{subequations} { 
\begin{flalign}
 (\textbf{P4}):\ \max_{\boldsymbol{x}} \quad &  \frac{1}{N}\sum_{n= 1}^{N}\sum_{k= 1}^{K}  \tilde{r}_k[n]  \nonumber\\
 {\rm{s.t.}}  \quad & \eqref{p1b},\eqref{p1c},\eqref{LMI1},\eqref{LMI2},\text{ and }\eqref{ineq3}.\nonumber
\end{flalign}}\end{subequations}
Firstly, to handle the non-convex constraint \eqref{LMI1}, we rewrite
$\boldsymbol{w}_k^\mathsf{}[n]
    \boldsymbol{w}_k^\mathsf{H}[n] \hat{\boldsymbol{h}}_k^\mathsf{}[n]$ as $\frac{\sqrt{g_0}}{\hat{d}_k[n]} \boldsymbol{w}_k^\mathsf{}[n]
    \boldsymbol{w}_k^\mathsf{H}[n] \hat{\boldsymbol{a}}_k^\mathsf{}[n] $. Since $\boldsymbol{w}_k^\mathsf{}[n]
    \boldsymbol{w}_k^\mathsf{H}[n]$ is Hermitian, we can obtain
\begin{equation} {
\begin{split}
    \boldsymbol{w}_k^\mathsf{}[n]
    \boldsymbol{w}_k^\mathsf{H}[n] \hat{\boldsymbol{a}}_k^\mathsf{}[n]  &= \chi_k[n] \boldsymbol{u}_k[n]\boldsymbol{u}_k^\mathsf{H}[n] \hat{\boldsymbol{a}}_k^\mathsf{}[n]\\
    &= \chi_k[n]  \left (
        \sum_{p=1}^M \tilde{a}_{k,p}[n] 
    \right ) \boldsymbol{u}_k[n],
\end{split}}
\end{equation}
where $\chi_k[n]$ and $\boldsymbol{u}_k[n]$ are the singular value and corresponding singular vectors of $\boldsymbol{w}_k^\mathsf{}[n]
    \boldsymbol{w}_k^\mathsf{H}[n]$, respectively, and 
\begin{equation} {
    \tilde{a}_{k,p}[n] = \overline{[\boldsymbol{u}_k[n]]_p} \mathrm{e}^{\mathrm{j}\vartheta_k[n]x_p[n]},}
\end{equation}
where $\overline{[\boldsymbol{u}_k[n]]_p}$ denotes the complex conjugate of the $p$-th element of $\boldsymbol{u}_k[n]$,
and $\vartheta[n] =\frac{2\pi}{\lambda}\cos\hat{\theta}_k[n]$. Then, we have
\begin{equation} {
\begin{split}
    &\boldsymbol{w}_k^\mathsf{}[n]
    \boldsymbol{w}_k^\mathsf{H}[n] \hat{\boldsymbol{a}}_k^\mathsf{}[n] \approx
    \chi_k[n]  \left (
        \sum_{p=1}^M \tilde{a}_{k,p}^{(l)}[n]
    \right )\boldsymbol{u}_k[n]\\
    &+ \mathrm{j} \chi_k[n] \vartheta_k[n]  
    \left (
        \sum_{p=1}^M \tilde{a}_{k,p}^{(l)}[n]
        \left( x_p[n]- x_p^{(l)}[n]\right)
    \right ) \boldsymbol{u}_k[n]\\
    &\triangleq \boldsymbol{v}_k[n],
\end{split}}
\end{equation}
where
\begin{equation} {
    \tilde{a}_{k,p}^{(l)}[n] = \overline{[\boldsymbol{u}_k[n]]_p} \mathrm{e}^{\mathrm{j}\vartheta_k[n]x_p^{(l)}[n]}.}
\end{equation}
Similarly, we rewrite $\hat{\boldsymbol{h}}_k^\mathsf{H}[n]\boldsymbol{w}_k^\mathsf{}[n]
    \boldsymbol{w}_k^\mathsf{H}[n]$ as $\frac{\sqrt{g_0}}{\hat{d}_k[n]} \hat{\boldsymbol{a}}_k^\mathsf{H}[n]  \boldsymbol{w}_k^\mathsf{}[n]
    \boldsymbol{w}_k^\mathsf{H}[n] $. Then, we have
\begin{equation} {
\begin{split}
    \hat{\boldsymbol{a}}_k^\mathsf{H}[n]  \boldsymbol{w}_k^\mathsf{}[n]
    \boldsymbol{w}_k^\mathsf{H}[n] &= \chi_k[n] \left(
        \hat{\boldsymbol{a}}_k^\mathsf{H}[n] 
        \boldsymbol{u}_k[n]
    \right)
    \boldsymbol{u}_k^\mathsf{H}[n] \\
    &= \chi_k[n]  \left (
        \sum_{p=1}^M \bar{a}_{k,p}^{}[n]
    \right ) \boldsymbol{u}_k^\mathsf{H}[n],
\end{split}}
\end{equation}
where
\begin{equation} {
    \bar{a}_{k,p}^{}[n] = {[\boldsymbol{u}_k[n]]_p} \mathrm{e}^{-\mathrm{j}\vartheta_k[n]x_p[n]}.}
\end{equation}
Then, we can obtain the approximation of $\hat{\boldsymbol{a}}_k^\mathsf{H}[n]  \boldsymbol{w}_k^\mathsf{}[n]
    \boldsymbol{w}_k^\mathsf{H}[n] $ as
\begin{equation} {
\begin{split}
    &\hat{\boldsymbol{a}}_k^\mathsf{H}[n]  \boldsymbol{w}_k^\mathsf{}[n]
    \boldsymbol{w}_k^\mathsf{H}[n] \approx
    \chi_k[n]  \left (
        \sum_{p=1}^M \bar{a}_{k,p}^{(l)}[n]
    \right ) \boldsymbol{u}_k^\mathsf{H}[n]\\
    &-\mathrm{j} \chi_k[n] \vartheta_k[n]  
    \left (
        \sum_{p=1}^M \bar{a}_{k,p}^{(l)}[n]\left( x_p[n]- x_p^{(l)}[n]\right)    \right )
        \boldsymbol{u}_k^\mathsf{H}[n]\\
        &\triangleq \bar{\boldsymbol{v}}_k[n],
\end{split}}
\end{equation}
where
\begin{equation} {
    \bar{a}_{k,p}^{(l)}[n] = {[\boldsymbol{u}_k[n]]_p} \mathrm{e}^{-\mathrm{j}\vartheta_k[n]x_p^{(l)}[n]}.}
\end{equation}
Then, \eqref{kappa} can be rewritten as
\begin{equation} {
    \kappa_k[n] = \frac{g_0}{\hat{d}_k^2[n]
    } \left \vert
        \hat{\boldsymbol{a}}_k[n]
        \boldsymbol{w}_k[n]
    \right \vert^2  - \alpha_k[n] - \lambda_k[n]\xi_k^2.}
\end{equation}
Furthermore, $\left \vert
        \hat{\boldsymbol{a}}_k[n]
        \boldsymbol{w}_k[n]
    \right \vert^2 $ can be represented as
\begin{equation} {
\begin{split}
    \left| \hat{\boldsymbol{{a}}}_{k}^{\mathsf{H}}[n]\boldsymbol{w}_{k}[n] \right|^{2} 
& = \left| \sum_{p=1}^{M} [w_{k}^{}[n] ]_p e^{{\mathrm{j}} \vartheta_{k}[n] x_{p}[n]  }  \right|^{2} 
\\
& = \sum_{p=1}^{M} \sum_{q=1}^{M} 
\left| [w_{k}^{}[n] ]_p [w_{k}^{}[n] ]_q \right|
\cos \Theta_{k,p,q}[n],
\end{split}}
\end{equation}
where $[w_{k}[n]]_p = |\boldsymbol{w}_{k}[n]| {\mathrm{e}}^{{\mathrm{j}} \angle [w_{k}[n]]_p}$ is the $p$-th element of the beamforming vector $\boldsymbol{w}_{i}[n]$ and $ \Theta_{k,p,q}[n] = \vartheta_{k}[n] (x_{p}[n]-x_{q}[n]) -( \angle [w_{k}[n]]_p - \angle [w_{k}[n]]_q) $. Then, we can obtain the approximation of $\left \vert
        \hat{\boldsymbol{a}}_k[n]
        \boldsymbol{w}_k[n]
    \right \vert^2 $ by the following lemma.
\begin{lemma}
For any $a \in \mathbb{R}$, we can obtain the lower bound of $\cos (a)$ by second-order Taylor expansion as
\begin{equation} {
\begin{split} 
&\cos (a) \approx \cos (a_0) - \sin (a_0) (a -a_0) - \frac{1}{2} \cos (a_0) (a -a_0)^{2}  \\
&\ \ \geq \cos (a_0) - \sin (a_0) (a -a_0) - \frac{1}{2} (a -a_0)^{2} = \check{f}\left(a \mid a_0 \right).
\end{split}}
\end{equation}
\label{lemma2nd}
\end{lemma}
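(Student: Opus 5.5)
The approach is to read the statement as two separate claims and prove each directly with elementary calculus. The first claim is that the second-order Taylor expansion of $\cos$ about $a_0$ approximates $\cos(a)$. The second claim is that replacing the coefficient $\cos(a_0)$ of the quadratic term by $1$ yields a global minorant $\check{f}(a\mid a_0)$ that is tight at $a_0$, which is what the SCA iteration in the antenna-position step requires.

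For the approximation, we expand $\cos$ around $a_0$ using the derivatives $\cos'(a_0)=-\sin(a_0)$ and $\cos''(a_0)=-\cos(a_0)$. This gives the quadratic expression on the first line of the statement, with remainder of order $(a-a_0)^3$, so nothing beyond a standard expansion is needed there.

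For the lower bound, we define
\begin{equation*}
g(a)=\cos(a)-\cos(a_0)+\sin(a_0)(a-a_0)+\tfrac{1}{2}(a-a_0)^2 ,
\end{equation*}
and show $g(a)\ge 0$ for all $a\in\mathbb{R}$.
\begin{itemize}
\item The values at the expansion point are $g(a_0)=0$ and $g'(a_0)=-\sin(a_0)+\sin(a_0)=0$.
\item The second derivative is $g''(a)=1-\cos(a)\ge 0$, so $g$ is convex on $\mathbb{R}$.
\item A convex differentiable function lies above its tangent line at $a_0$. That tangent line is identically zero, hence $g(a)\ge 0$ everywhere, which is exactly $\cos(a)\ge \check{f}(a\mid a_0)$.
\end{itemize}
Equivalently, Taylor's theorem with Lagrange remainder gives $\cos(a)=\cos(a_0)-\sin(a_0)(a-a_0)-\tfrac{1}{2}\cos(c)(a-a_0)^2$ for some $c$ between $a_0$ and $a$. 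Since $-\cos(c)\ge -1$, the bound follows at once. I would present this one-line version as the main argument.

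The main care needed is in wording rather than analysis. The displayed chain mixes "$\approx$" and "$\ge$". The inequality actually proved is $\cos(a)\ge\check{f}(a\mid a_0)$; it does \emph{not} say the truncated Taylor polynomial is at least $\check{f}$ in general, although that also holds because $\cos(a_0)\le 1$. I would state both facts explicitly.

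I would also record two properties that the later SCA convergence argument needs: the bound is tight at $a_0$, since $\check{f}(a_0\mid a_0)=\cos(a_0)$, and the gradients match there. Finally, $\check{f}(\cdot\mid a_0)$ is concave quadratic in $a$, and $\Theta_{k,p,q}[n]$ is affine in $\boldsymbol{x}[n]$, so composing the two preserves concavity in $\boldsymbol{x}[n]$.
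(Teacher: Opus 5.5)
Your proof is correct, and it is more complete than what the paper offers. The paper gives no separate proof of this lemma. Its only justification is the displayed chain itself: truncate the Taylor series at second order, then use $\cos(a_0)\le 1$ on the quadratic coefficient.

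Read literally, that chain only shows that the truncated polynomial $T_2(a)=\cos(a_0)-\sin(a_0)(a-a_0)-\tfrac{1}{2}\cos(a_0)(a-a_0)^2$ dominates $\check{f}(a\mid a_0)$. Because the first link is ``$\approx$'', it does not by itself give $\cos(a)\ge \check{f}(a\mid a_0)$. Nor is $T_2$ a minorant in general. For $a_0=\pi$, $T_2(a)=-1+\tfrac{1}{2}(a-\pi)^2$ lies strictly above $\cos(a)$ for every $a\neq\pi$, since $\cos t> 1-\tfrac{1}{2}t^2$ for $t\neq 0$.

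Your key move is to bound the curvature at the Lagrange intermediate point $c$ rather than at $a_0$. The convexity version, $g''(a)=1-\cos(a)\ge 0$ with $g(a_0)=g'(a_0)=0$, is equivalent. This is what turns the paper's heuristic into the global minorant valid for all $a\in\mathbb{R}$ that the antenna-position SCA step actually uses.

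The paper's presentation buys only the motivation for the shape of $\check{f}$: the Taylor quadratic with its curvature coefficient replaced by the worst case. Your route buys validity.

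Your added remarks are also the right properties for the SCA step. Tightness and gradient matching at $a_0$ keep the previous iterate feasible for the surrogate constraint. Concavity in $\boldsymbol{x}[n]$ also holds: compose $\check{f}$ with the affine map defining $\Theta_{k,p,q}[n]$ and sum with nonnegative weights $|[\boldsymbol{w}_k[n]]_p[\boldsymbol{w}_k[n]]_q|$. Together these make the corresponding constraint in \textbf{P4-1} convex, which the paper's monotonicity argument for convergence tacitly assumes.
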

Following Lemma \ref{lemma2nd}, we have
\begin{equation} {
    \left \vert
        \hat{\boldsymbol{a}}_k[n]
        \boldsymbol{w}_k[n]
    \right \vert^2 \geq \frac{1}{2} \boldsymbol{x}^{\mathsf{T}}[n] \boldsymbol{X}_{k}[n] \boldsymbol{x}[n] +\boldsymbol{y}_{k}^{\mathsf{T}}[n] \boldsymbol{x}[n]+z_{k}[n],}
\end{equation}
where 
\begin{equation} {
\boldsymbol{X}_{k}[n] =  -2 \vartheta_{k}^{2}[n]\left(\eta[n] \operatorname{diag}(\tilde{\boldsymbol{w}}_{k}[n])-\tilde{\boldsymbol{w}}_{k}[n] \tilde{\boldsymbol{w}}_{k}[n]^{\mathsf{T}}\right),}
\end{equation}
\begin{equation} {
    \begin{split}
 [\boldsymbol{y}_{k}[n]]_{p} = &\ 2 \vartheta_{k}^{2}[n] \sum_{q=1}^{M}\left|[\boldsymbol{w}_{k}[n]]_{p} [\boldsymbol{w}_{k}[n]]_{q}\right|\left(x_{p}^{(l)}[n]-x_{q}^{(l)}[n]\right) \\
& -2 \vartheta_{k}[n] \sum_{q=1}^{M}\left|[\boldsymbol{w}_{k}[n]]_{p} [\boldsymbol{w}_{k}[n]]_{q}\right| \sin \left( \Theta_{k,p,q}^{(l)}[n] \right),       
    \end{split}}
\end{equation}
\begin{equation} {
    \begin{aligned}
&z_{k}[n] =  \sum_{p=1}^{M} \sum_{q=1}^{M}\left|[\boldsymbol{w}_{k}[n]]_{p} [\boldsymbol{w}_{k}[n]]_{q} \right| \cos \left( \Theta_{k,p,q}^{(l)}[n] \right) \\
& +\vartheta_{k}[n] \sum_{p=1}^{M} \sum_{q=1}^{M}\left|[\boldsymbol{w}_{k}[n]]_{p} [\boldsymbol{w}_{k}[n]]_{q}\right| \sin \left( 
 \Theta_{k,p,q}^{(l)}[n] \right) \times \\
 & \left(x_{p}^{(l)}[n]-x_{q}^{(l)}[n]\right) 
 -\frac{1}{2} \vartheta_{k}^{2}[n] \sum_{p=1}^{M} \sum_{q=1}^{M}\left|[\boldsymbol{w}_{k}[n]]_{p} [\boldsymbol{w}_{k}[n]]_{q}\right| \times \\
& \left(x_{p}^{(l)}[n]-x_{q}^{(l)}[n]\right)^{2},  
    \end{aligned}}
\end{equation}
with $\eta[n] = \sum_{p=1}^{M} | [\boldsymbol{w}_{k}[n]]_{p}| $ and $\Tilde{\boldsymbol{w}}_{k}[n] = [ |[\boldsymbol{w}_{k}[n]]_{1}|, \ldots, |[\boldsymbol{w}_{k}[n]]_{M}| ]^{\mathsf{T}} $.
The new slack variables $\varrho_k [n], \forall k,n$ satisfy
\begin{equation} {
    \frac{1}{2} \boldsymbol{x}^{\mathsf{T}}[n] \boldsymbol{X}_{k}[n] \boldsymbol{x}[n] +\boldsymbol{y}_{k}^{\mathsf{T}}[n] \boldsymbol{x}[n]+z_{k}[n] \geq \varrho_k [n].}
\end{equation}
Then, \eqref{kappa} can be approximated as
\begin{equation} {
    \kappa_k[n] \geq  \kappa_k[n] = \frac{g_0}{\hat{d}_k^2[n]
    } \varrho_k [n]  - \alpha_k[n] - \lambda_k[n]\xi_k^2 \triangleq \bar{\kappa}_k [n].}
\end{equation}
As a result, constraint \eqref{LMI1} can be reformulated as
\begin{equation} {
    \begin{bmatrix} \boldsymbol{w}_k^\mathsf{}[n]
    \boldsymbol{w}_k^\mathsf{H}[n] + \lambda_k[n]\boldsymbol{I} &  \frac{\sqrt{g_0}}{\hat{d}_k[n]} \boldsymbol{v}_k[n] \\ \frac{\sqrt{g_0}}{\hat{d}_k[n]} \bar{\boldsymbol{v}}_k[n] & \bar{\kappa}_k[n]  \end{bmatrix}
        \succeq \boldsymbol{0}.
        \label{LMI3}}
\end{equation}

Then, we rewrite $\hat{\boldsymbol{h}}_k^\mathsf{H} [n] \boldsymbol{W}_{-{k}} [n]$ as $\frac{\sqrt{g_0}}{\hat{d}_k[n]} \hat{\boldsymbol{a}}_k^\mathsf{H} [n] \boldsymbol{W}_{-{k}} [n]$, where $\hat{\boldsymbol{a}}_k^\mathsf{H} [n] \boldsymbol{W}_{-{k}} [n]$ can be approximated as
\begin{equation} {
\begin{split}
    &\hat{\boldsymbol{a}}_k^\mathsf{H} [n] \boldsymbol{W}_{-{k}} [n] = 
    \left [
     \hat{\boldsymbol{a}}_k^\mathsf{H} [n] \boldsymbol{w}_1 [n],\ldots,
    \hat{\boldsymbol{a}}_k^\mathsf{H} [n] \boldsymbol{w}_{k-1} [n],\right.\\&\quad\quad\quad\quad\quad\quad\quad\ \left.
    \hat{\boldsymbol{a}}_k^\mathsf{H} [n] \boldsymbol{w}_{k+1} [n],\ldots,
    \hat{\boldsymbol{a}}_k^\mathsf{H} [n] \boldsymbol{w}_K [n]
    \right ]\\&
    \approx
    \left [ 
        \tilde{A}_{1}^k[n], \ldots, \tilde{A}_{{k-1}}^k[n], \tilde{A}_{{k+1}}^k[n], \ldots, \tilde{A}_{K}^k[n]
    \right] \triangleq \tilde{\boldsymbol{A}}_k[n],
\end{split}}
\end{equation}
where
\begin{equation} {
\begin{split}
    &\tilde{A}_{i}^k [n] = \sum_{p=1}^M [\boldsymbol{w}_i[n]]_p\mathrm{e}^{
    -\mathrm{j}\vartheta_k[n]x_p^{(l)}[n]
    } 
    \\&\ \ \ - \mathrm{j}\vartheta_k[n] \left (
        \sum_{p=1}^M [\boldsymbol{w}_i[n]]_p\mathrm{e}^{
    -\mathrm{j}\vartheta_k[n]x_p^{(l)}[n]} \left( x_p[n] -x_p^{(l)}[n]
    \right)
    \right ),\\
    &\quad\quad\quad\quad\quad\quad\quad\quad\quad\quad\quad\quad\quad\quad\ \ \forall i \in \mathcal{M},\ \text{}\ i \neq k.
\end{split}}
\end{equation}
Similarly, $\boldsymbol{W}_{-{k}}^\mathsf{H} [n] \hat{\boldsymbol{h}}_k [n] $ can be expressed as $\frac{\sqrt{g_0}}{\hat{d}_k[n]} \boldsymbol{W}_{-{k}}^\mathsf{H} [n] \hat{\boldsymbol{a}}_k [n]$, where $\boldsymbol{W}_{-{k}}^\mathsf{H} [n] \hat{\boldsymbol{a}}_k [n]$ can be approximated as
\begin{equation} {
\begin{split}
    &\boldsymbol{W}_{-{k}}^\mathsf{H} [n] \hat{\boldsymbol{a}}_k [n] = 
    \left [
      \boldsymbol{w}_1^\mathsf{H} [n]\hat{\boldsymbol{a}}_k [n],\ldots,
    \boldsymbol{w}_{k-1}^\mathsf{H} [n]\hat{\boldsymbol{a}}_k [n] ,\right.\\&\quad\quad\quad\quad\quad\quad\quad\ \left.
     \boldsymbol{w}_{k+1}^\mathsf{H}  [n]\hat{\boldsymbol{a}}_k[n],\ldots,
     \boldsymbol{w}_K^\mathsf{H} [n]\hat{\boldsymbol{a}}_k [n]
    \right ]^\mathsf{T}\\&
    \approx
    \left [ 
        \tilde{B}_{1}^k[n], \ldots, \tilde{B}_{{k-1}}^k[n], \tilde{B}_{{k+1}}^k[n], \ldots, \tilde{B}_{K}^k[n]
    \right]^\mathsf{T} \triangleq \tilde{\boldsymbol{B}}_k[n],
\end{split}}
\end{equation}
where
\begin{equation} {
\begin{split}
    &\tilde{B}_{i}^k [n] = \sum_{p=1}^M \overline{[\boldsymbol{w}_i[n]]_p}\mathrm{e}^{
    \mathrm{j}\vartheta_k[n]x_p^{(l)}[n]
    } 
    \\&\ \ \ + \mathrm{j}\vartheta_k[n] \left (
        \sum_{p=1}^M \overline{[\boldsymbol{w}_i[n]]_p}\mathrm{e}^{
    \mathrm{j}\vartheta_k[n]x_p^{(l)}[n]} \left( x_p[n] -x_p^{(l)}[n]
    \right)
    \right ),\\
    &\quad\quad\quad\quad\quad\quad\quad\quad\quad\quad\quad\quad\quad\quad\ \ \forall i \in \mathcal{M},\ \text{}\ i \neq k.
\end{split}}
\end{equation}
Accordingly, constraint \eqref{LMI2} can be reformulated as
\begin{equation} {
    \begin{bmatrix}
\beta_k [n] - \sigma_k^2 - \varpi_k [n] & \frac{\sqrt{g_0}}{\hat{d}_k[n]}\tilde{\boldsymbol{A}}_k[n]  & \boldsymbol{0}_{} \\
\frac{\sqrt{g_0}}{\hat{d}_k[n]}\tilde{\boldsymbol{B}}_k[n] & \boldsymbol{I}_{} & \xi_k \boldsymbol{W}_{-{k}}^\mathsf{H} [n] \\
\boldsymbol{0}_{} & \xi_k \boldsymbol{W}_{-{k}} [n]& \varpi_k [n] \boldsymbol{I}_{}
\end{bmatrix} \succeq \boldsymbol{0}.
\label{LMI4}}
\end{equation}
To this end, problem \textbf{P4} can be reformulated as
\begin{subequations} { 
\begin{flalign}
 (\textbf{P4-1}):\ \max_{\boldsymbol{x}} \quad &  \frac{1}{N}\sum_{n= 1}^{N}\sum_{k= 1}^{K}  \tilde{r}_k[n]  \nonumber\\
 {\rm{s.t.}}  \quad & \eqref{p1b},\eqref{p1c}, \eqref{ineq3},\eqref{LMI3},\text{ and }\eqref{LMI4}.\nonumber
\end{flalign}}\end{subequations}
Problem \textbf{P4-1} is now reformulated as a standard convex optimization problem, which can be efficiently solved using convex optimization solvers such as CVX.

\subsection{Algorithm Analysis}
{
We summarize the whole procedure of the AO algorithm in Algorithm \ref{alg:AO}.
The algorithm is initialized in Line 1 with feasible solutions of the transmit beamforming and antenna positions, and the iteration index is initialized as $0$.
From Lines 2 to 9, the AO algorithm proceeds through repeated iterations.
Specifically, Line 3 and 4 optimize the transmit beamforming $\boldsymbol{w}$ via IPM while fixing antenna positions; Line 5 and 6 optimize the antenna positions $\boldsymbol{x}$ via IPM while fixing the beamforming; Line 7 updates the objective value; Line 8 increments the iteration index; and Line 9 checks convergence, i.e., stopping if the objective change is below the threshold or the max iteration number is reached.
Finally, the algorithm outputs the converged optimized solution.
Moreover, the following Proposition characterizes the convergence behavior of the proposed AO-based algorithm.
}
\begin{algorithm}[t]
\caption{{AO-based Robust Optimization Algorithm for MA-Enabled CVNs.}}
\label{alg:AO} 
\begin{algorithmic}[1]
\REQUIRE
{Locations of BS and VUs, estimated CSI $\hat{\boldsymbol{h}}_k[n]$, uncertainty radii $\xi_k$, iteration index $ l $, maximum iteration number $l_{\rm{max}}$, accuracy threshold $ \varepsilon > 0 $.
\STATE
\textbf{Initialize:} Feasible solution $\{\boldsymbol{w}^{(0)}[0]\}$ and $\{ \boldsymbol{x}_{}^{(0)}[n] \}$, $ l = 1 $.}
\STATE
\textbf{Repeat:}
\STATE \textbf{\quad\ Step 1:} beamforming optimization (\textbf{P3-1})
\STATE \textbf{\quad\quad\ \ }Solve problem \textbf{P3-1}
via IPM to obtain $\boldsymbol{w}^{(l+1)}$;
\STATE \textbf{\quad\ Step 2:} antenna position optimization (\textbf{P4-1})
\STATE \textbf{\quad\quad\ \ }Solve problem \textbf{P4-1}
via IPM to obtain $\boldsymbol{x}^{(l+1)}$;
\STATE Update the objective function value of problem \textbf{P2-1} $\Phi^{(l+1)}$ according to the obtained $\boldsymbol{w}^{(l+1)}$ and $\boldsymbol{x}^{(l+1)}$.
\STATE Update $ l \leftarrow l + 1 $.
\STATE
\textbf{Until:} $\left| \Phi^{(l)} - \Phi^{(l-1)} \right| \le \varepsilon$ or $l > l_{\max}$.
\ENSURE
{The optimized solution for the transmit beamforming $\boldsymbol{w}^{(l)}$, and the antenna position $\boldsymbol{x}^{(l)}$.}
\end{algorithmic} 
\end{algorithm}

{
\begin{proposition}
    (Convergence Analysis): The proposed AO algorithm is guaranteed to converge to a stationary solution.
    \label{propno2}
\end{proposition}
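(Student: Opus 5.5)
The plan is the standard monotone-bounded-sequence argument for SCA-based alternating optimization. Let $\Phi(\boldsymbol{w},\boldsymbol{x})$ denote the objective of \textbf{P2-1}. The goal is to show that the sequence $\{\Phi^{(l)}\}$ generated by Algorithm~\ref{alg:AO} is non-decreasing and bounded above, so it converges. The limit point will then satisfy the KKT conditions of \textbf{P2-1}.

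\emph{Step 1 (beamforming step is monotone).} Three facts hold for the beamforming update. First, the linearization $\check{W}_k[n]$ is a global under-estimator of $\boldsymbol{w}_k[n]\boldsymbol{w}_k^{\mathsf{H}}[n]$, since $(\boldsymbol{w}-\boldsymbol{w}^{(l)})(\boldsymbol{w}-\boldsymbol{w}^{(l)})^{\mathsf{H}}\succeq\boldsymbol{0}$. Second, the bound of Lemma~\ref{lemmaup} is a global over-estimator of $\beta_k\tilde{\gamma}_k$. Third, both approximations are tight, with matching gradients, at the current point $(\boldsymbol{w}^{(l)},\boldsymbol{x}^{(l)})$. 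Together these give two properties. (i) Any point feasible for \textbf{P3-1} is feasible for \textbf{P3}; for this we still need the constraint matrix in \eqref{lb1} to be dominated by that in \eqref{LMI1}, which has to be checked. (ii) The previous iterate, together with its slack variables, is feasible for \textbf{P3-1}. Hence the optimal value of the convex problem \textbf{P3-1} satisfies $\Phi(\boldsymbol{w}^{(l+1)},\boldsymbol{x}^{(l)})\ge\Phi(\boldsymbol{w}^{(l)},\boldsymbol{x}^{(l)})$.

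\emph{Step 2 (position step is monotone).} The same argument is applied to \textbf{P4-1}. Lemma~\ref{lemma2nd} gives a global lower bound on the cosine, so the quadratic surrogate $\varrho_k[n]$ is a valid minorant of $|\hat{\boldsymbol{a}}_k^{\mathsf{H}}\boldsymbol{w}_k|^2$ and is tight at $\boldsymbol{x}^{(l)}$. The first-order expansions $\boldsymbol{v}_k,\bar{\boldsymbol{v}}_k,\tilde{\boldsymbol{A}}_k,\tilde{\boldsymbol{B}}_k$ coincide with the exact quantities at $\boldsymbol{x}^{(l)}$. Therefore $\boldsymbol{x}^{(l)}$ is feasible for \textbf{P4-1}, and $\Phi^{(l+1)}=\Phi(\boldsymbol{w}^{(l+1)},\boldsymbol{x}^{(l+1)})\ge\Phi(\boldsymbol{w}^{(l+1)},\boldsymbol{x}^{(l)})$.

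\emph{Step 3 (boundedness and stationarity).} Each SINR is bounded by $P_{\max}M g_0/(\hat{d}_k^2\sigma_k^2)$, using \eqref{p1a} and $|[\hat{\boldsymbol{a}}_k]_p|=1$. So $\Phi$ is bounded above by a constant. By monotone convergence, $\{\Phi^{(l)}\}$ converges, which also justifies the stopping rule in Line~9. For stationarity, I would use the standard minorization–maximization result (Marks--Wright / inner-approximation). The surrogates are tight in value and gradient at the expansion point. Hence any limit point of the compact iterate sequence, which is compact because of \eqref{p1a}--\eqref{p1c}, satisfies the KKT conditions of \textbf{P2-1}. This requires a constraint qualification such as Slater's condition for the convex subproblems.

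I expect the main obstacle to be Step~2. The linearized terms $\boldsymbol{v}_k$ and $\tilde{\boldsymbol{A}}_k$ are only first-order approximations, not certified bounds, so it is not automatic that feasibility for \textbf{P4-1} implies feasibility for \textbf{P4}. My first attempt would be to argue through a Schur-complement reduction (Lemma~\ref{lemma2}) that the LMI residual is controlled by the second-order remainder. If that fails, I would restrict $|x_p-x_p^{(l)}|$ to a trust region where the remainder is absorbed by the slack $\varrho_k[n]$. If neither works, I would settle for the weaker claim that $\{\Phi^{(l)}\}$ is monotone for the surrogate sequence and converges.
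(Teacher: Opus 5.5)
Your skeleton (objective bounded above, plus the two monotonicity inequalities) is exactly the paper's Appendix~B, and your Step~1(i) is sound. The domination you leave open does hold: the matrix in \eqref{LMI1} equals $[\boldsymbol{I},\hat{\boldsymbol{h}}_k]^{\mathsf{H}}\boldsymbol{w}_k\boldsymbol{w}_k^{\mathsf{H}}[\boldsymbol{I},\hat{\boldsymbol{h}}_k]+\mathrm{diag}(\lambda_k\boldsymbol{I},-\alpha_k-\lambda_k\xi_k^2)$, which is order-preserving in $\boldsymbol{w}_k\boldsymbol{w}_k^{\mathsf{H}}$, so $\check{W}_k\preceq\boldsymbol{w}_k\boldsymbol{w}_k^{\mathsf{H}}$ makes \eqref{lb1} a restriction.

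The gap is Step~2, but not quite where you place it. The paper's proof simply states $\Psi\{\boldsymbol{w}^{(l+1)},\boldsymbol{x}^{(l+1)}\}\ge\Psi\{\boldsymbol{w}^{(l+1)},\boldsymbol{x}^{(l)}\}$ without justification. The expansion $\boldsymbol{v}_k$ is in fact harmless. Dropping $[n]$, $\boldsymbol{v}_k=\boldsymbol{w}_k\overline{\ell_k}$, where $\ell_k$ is the linearization of $\hat{\boldsymbol{a}}_k^{\mathsf{H}}\boldsymbol{w}_k$. For $\lambda_k>0$ (the case $\lambda_k=0$ is trivial), a Schur complement turns \eqref{LMI3} into $\varrho_k-\frac{\|\boldsymbol{w}_k\|^2}{\|\boldsymbol{w}_k\|^2+\lambda_k}|\ell_k|^2\ge\frac{\hat{d}_k^2}{g_0}(\alpha_k+\lambda_k\xi_k^2)$, and \eqref{LMI1} into $\frac{\lambda_k}{\|\boldsymbol{w}_k\|^2+\lambda_k}|\hat{\boldsymbol{a}}_k^{\mathsf{H}}\boldsymbol{w}_k|^2\ge\frac{\hat{d}_k^2}{g_0}(\alpha_k+\lambda_k\xi_k^2)$. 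The first implies the second, because the Lemma~\ref{lemma2nd} minorant bounding $\varrho_k$ lies below both $|\hat{\boldsymbol{a}}_k^{\mathsf{H}}\boldsymbol{w}_k|^2$ and $|\ell_k|^2$ (expanding, the latter gap is a nonnegative sum of squares).

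The real failure is \eqref{LMI4}. By Schur complements it says that a positive-definite quadratic form in $\tilde{\boldsymbol{B}}_k$, the linearization of $\boldsymbol{W}_{-k}^{\mathsf{H}}\hat{\boldsymbol{a}}_k$, is at most $\beta_k-\sigma_k^2-\varpi_k$. Bounding interference from above needs a majorant, and a linearization is not one. For instance, take $M=2$, let the two summands of $\hat{\boldsymbol{a}}_k^{\mathsf{H}}\boldsymbol{w}_i$ have moduli $1$ and $2$ and be anti-aligned at $\boldsymbol{x}^{(l)}$, and move only $x_1$. Up to a unimodular factor the true term is $\mathrm{e}^{\mathrm{j}u}-2$, with squared modulus $5-4\cos u\approx 1+2u^2$. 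Its linearization $-1+\mathrm{j}u$ has squared modulus only $1+u^2$.

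Hence \textbf{P4-1} can return positions at which the true worst-case interference exceeds $\beta_k$. That output is then infeasible for the next \textbf{P3-1}, whose constraints at the expansion point $\boldsymbol{w}^{(l+1)}$ are the exact ones. So neither the true objective nor your weaker surrogate sequence is guaranteed monotone, and your claim (ii) in Step~1 can fail from the second iteration on.

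Your fallbacks do not reach this. $\varrho_k$ does not enter \eqref{LMI4}, and a trust region alone only shrinks an $O(\|\boldsymbol{x}-\boldsymbol{x}^{(l)}\|^2)$ violation rather than removing it. You would have to modify \textbf{P4-1}. One option is to build the remainder bound $\frac{1}{2}\vartheta_k^2\sum_p|[\boldsymbol{w}_i]_p|(x_p-x_p^{(l)})^2$ of each $\hat{\boldsymbol{a}}_k^{\mathsf{H}}\boldsymbol{w}_i$ into \eqref{LMI4} as a margin. Another is to accept a position update only if the exact worst-case objective does not decrease.

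Step~3 has a second, independent gap. The inner-approximation theorem you cite concerns a single convex restriction in all variables updated jointly. Here $(\boldsymbol{w},\text{slacks})$ and $(\boldsymbol{x},\text{slacks})$ are updated alternately, while \eqref{LMI1} and \eqref{LMI2} couple $\boldsymbol{w}$, $\boldsymbol{x}$ and the shared slacks. The feasible set is therefore not a product, and limit points are at best blockwise stationary, which does not imply KKT for \textbf{P2-1}. For example, alternately maximizing $\mu+\nu$ over $\mu^2+\nu^2\le 1$ from $(1,0)$ never moves, yet $(1,0)$ is not a KKT point. The paper gives no stationarity argument at all; its proof ends with convergence of the objective sequence. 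You should either claim only that (after repairing Step~2) or add an argument that handles the coupled constraints, such as a joint SCA step in $(\boldsymbol{w},\boldsymbol{x})$.
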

\begin{proof}
Please refer to Appendix \ref{apppro2}.
\end{proof}
}

{
The overall computational complexity of Algorithm~1 is dominated by the per-iteration cost of solving the convex subproblems \textbf{P3-1} and \textbf{P4-1} via IPM. Specifically, the beamforming optimization problem \textbf{P3-1} involves $\mathcal{O}(KNM)$ complex variables and the LMI constraints, with the complexity of $\mathcal{O}\big( (KNM)^{3.5} \log(1/\varepsilon) \big)$. Accordingly, the antenna position optimization problem \textbf{P4-1} has complexity of $\mathcal{O}\big( (KN)^{3.5} \log(1/\varepsilon) \big)$. With at most $l_{\max}$ outer iterations, the total complexity is $\mathcal{O}\big( l_{\max} \cdot (KNM)^{3.5} \log(1/\varepsilon) \big)$.
}

\section{Numerical Results}

In this section, we assess the performance of the proposed scheme for the MA-enabled CVN. The system comprises a BS positioned at the center of a squared coverage area, specifically at coordinates $[250~\mathrm{m}, 250~\mathrm{m}]$, while two VUs move according to a preset trajectory.
Unless otherwise specified, the simulation parameters are set as follows: the total number of MAs is $M = 4$, the service period consists of $N = 6$ time slots, the BS is deployed at an altitude of $H = 12~\mathrm{m}$, and the reference noise power at each VU is $\sigma_k^2 = -80~\mathrm{dBm}$. The maximum allowable transmit power of the BS is $P_{\max} = 34~\mathrm{dBm}$. The movable range for the MAs is constrained to $[0, L
_{\mathrm{}}] = [0, 6\lambda]$, and the minimum separation between any adjacent antennas is $d_{\mathrm{min}} = 0.3\lambda$.
To demonstrate the performance gain of the proposed scheme, we compare it with the following three benchmark schemes:
\begin{itemize}
    \item {\textbf{Upper bound scheme}: In this scheme, it is assumed that the system has perfect knowledge of the CSI.}
    \item \textbf{FPA scheme}: In this scheme, the transmit beamforming of the BS is optimized by solving problem \textbf{P3-1}, while the positions of the antennas are fixed.
    \item \textbf{Fixed beamforming (FB) scheme}: In this scheme, the antenna positions are optimized by solving problem \textbf{P4-1}, whereas the transmit beamforming of the BS is fixed (e.g., designed based on a predefined strategy or the initial solution).
\end{itemize}

\begin{figure}[t]
		\centering
		\includegraphics[width=0.9\linewidth]{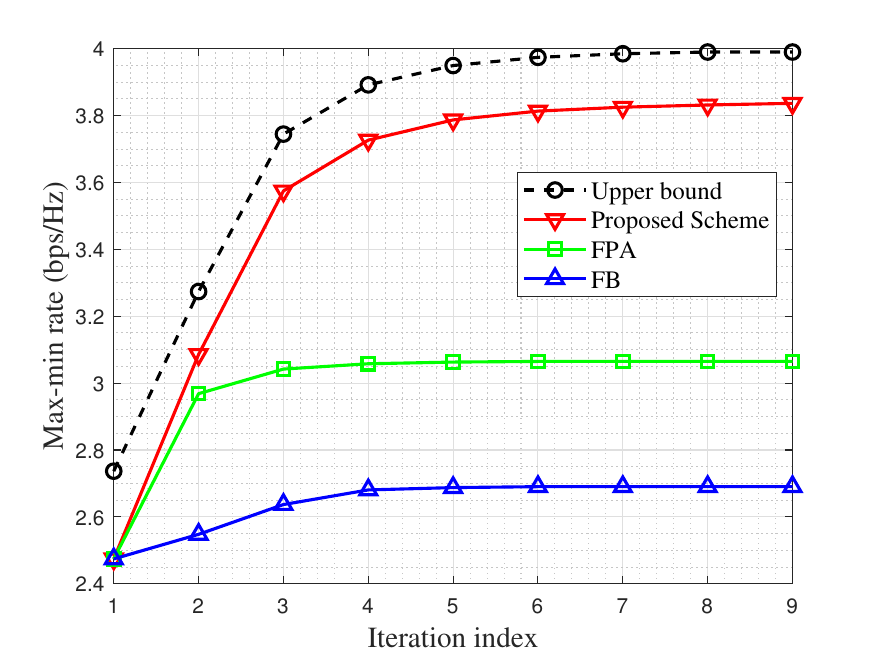}
		\captionsetup{font=small}
            \caption{\centering{Convergence behavior of different schemes.
            }}
		\label{fig: Convergence}
\end{figure}
\begin{figure}[t]
		\centering
		\includegraphics[width=0.9\linewidth]{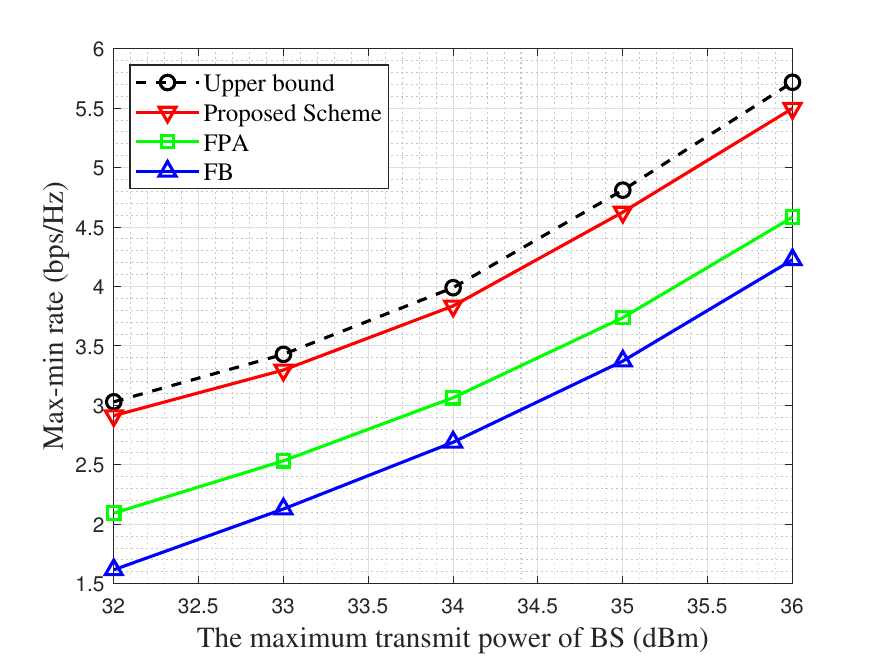}
            \captionsetup{font=small}
		\caption{\centering{The Max-min data rate versus the maximum transmit power.
            }}
		\label{fig: Transmit Power}
\end{figure}

Fig.~\ref{fig: Convergence} illustrates the convergence behavior of all schemes. As we can observe, the proposed scheme attains the highest max-min data rate compared to the FPA scheme and the FB scheme, confirming the effectiveness of jointly optimizing the transmit beamforming and the antenna positions. Compared with the upper bound scheme, the proposed scheme achieves performance close to it, indicating that under the system constraints, the proposed scheme effectively exploits the synergistic gains of antenna placement and transmit beamforming.
Moreover, the proposed scheme surpasses both the FPA scheme and the FB scheme. In the FPA scheme, the fixed antenna positions prevent exploitation of positional flexibility for performance gain. In the FB scheme, although positions are adjustable, beamforming is designed independently of channel conditions, limiting its ability to mitigate interference or enhance the desired signals.

\begin{figure}[t]
		\centering
		\includegraphics[width=0.9\linewidth]{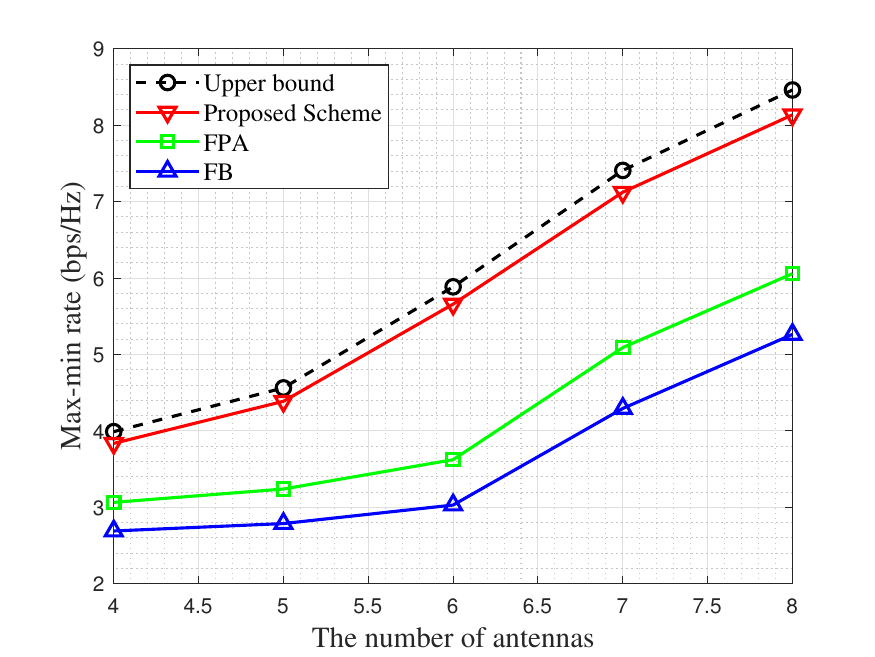}
            \captionsetup{font=small}
		\caption{\centering{The Max-min data rate versus the number of antennas.
            }}
		\label{fig: number_antenna}
\end{figure}
\begin{figure}[t]
		\centering
		\includegraphics[width=0.9\linewidth]{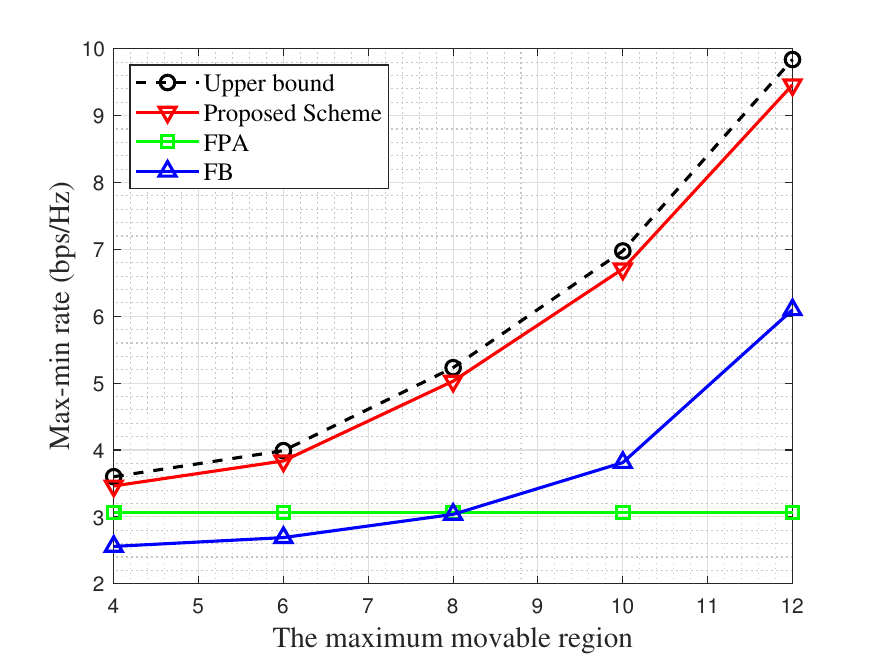}
            \captionsetup{font=small}
		\caption{\centering{The Max-min data rate versus the maximum movable region.
            }}
		\label{fig: range_antenna}
\end{figure}

Fig.~\ref{fig: Transmit Power} shows the variation of the max-min data rate with the maximum transmit power of the BS. It can be seen that as the transmit power increases, the performance of all schemes gradually improves. The reason is that, higher transmit power of the BS can enhance the received signal strength, thereby increasing the max-min data rate of VUs. Due to the joint optimization of the transmit beamforming and the antenna positions, the proposed scheme can improve the desired signal power and effectively suppress multi-user interference. Therefore, it is always superior to the FPA and FB schemes over the entire power range, and is clost to the upper bound scheme.

{
Fig.~\ref{fig: number_antenna} illustrates the variation of the max–min data rate with the number of antennas of the BS. As the number of antennas increases, the performance of all schemes improves, since more antennas provide greater spatial DoFs, enhancing both signal strengthening and interference suppression capabilities. However, the proposed scheme and the upper bound scheme exhibit more pronounced performance gains. This is primarily attributed to the joint optimization of antenna positions and transmit beamforming, which allows the additional antennas to be utilized more efficiently, resulting in stronger beam focusing and better interference management.
}

{
Fig.~\ref{fig: range_antenna} shows the variation of the max-min data rate with the size of the movable region of the MA. It can be observed that the max-min data rates of the proposed scheme, the upper bound scheme, and the FB scheme all gradually increase as the movable region of the antenna expands. This is mainly because a larger movable range provides more spatial DoFs, enabling the BS to achieve more precise and flexible beamforming while effectively suppressing energy leakage to undesired regions. In addition, when the maximum movable region exceeds $8\lambda$, the performance of the FB scheme surpasses that of the FPA scheme, which indicates that under a larger moving range, the flexibility of the antenna position can significantly improve system performance by the spatial DoFs even if beamforming is not optimized with the CSI.
}

\begin{figure}[t]
		\centering
		\includegraphics[width=0.9\linewidth]{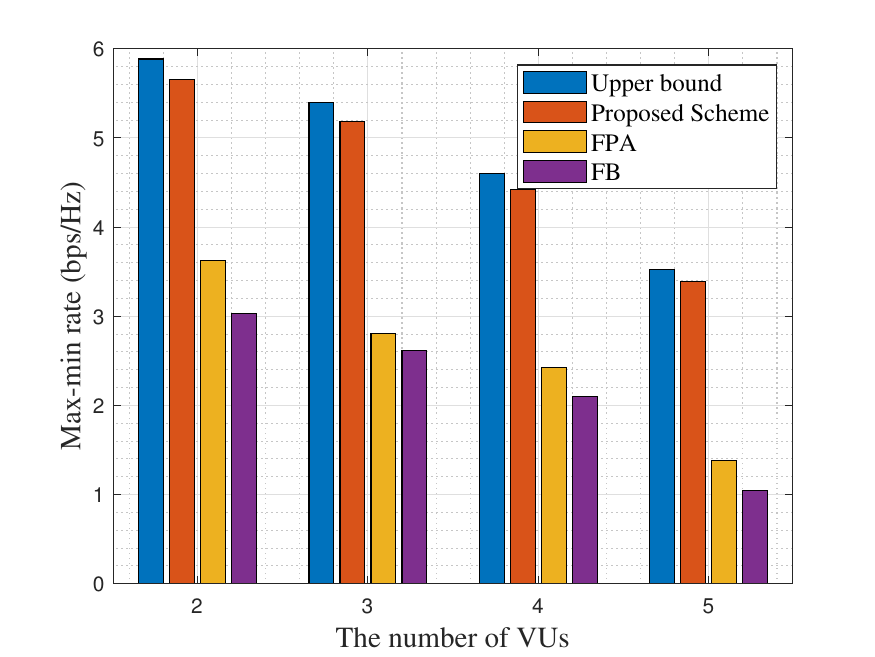}
            \captionsetup{font=small}
		\caption{\centering{The Max-min data rate versus the number of VUs.
            }}
		\label{fig: number_VU}
\end{figure}
{
Fig.~\ref{fig: number_VU} presents the variation of the max-min data rate with the number of VUs. It can be observed that under different numbers of VUs, the performance of the proposed scheme is always the closest to the upper bound scheme and better than the FPA scheme and the FB scheme, indicating that the proposed scheme is more robust against changes in the number of VUs. As the number of VUs increases, the max-min data rate of all schemes decreases. The reason is that, more VUs lead to more severe inter-VU interference, and at the same time, the spatial DoFs and transmit power resources that can be allocated to each VU are reduced, thereby limiting the performance improvement of the system.
}

\section{Conclusion}
This paper tackles the issue of imperfect CSI in high-mobility CVNs by integrating MA technology. We propose a robust joint design of the transmit beamforming and the antenna positions to maximize the worst-case VU rate, accounting for bounded channel estimation uncertainty. Leveraging an AO algorithm, the original non-convex problem is decomposed into tractable convex subproblems, which are solved via standard convex solvers. Simulation results demonstrate that the proposed scheme significantly outperforms the FPA scheme and the FB scheme in terms of worst-case max-min rate, validating the advantage of spatial adaptability enabled by MAs. These findings highlight the promise of the MA-enabled BSs in improving communication robustness and spectral efficiency under dynamic CVNs.

\begin{appendices}

\section{Proof of Proposition \ref{propno1}}
\begin{proof}
    \label{apppro1}

{
Let $d_k [n] = \sqrt{\Vert \boldsymbol{b}-{\boldsymbol{q}}_k[n]\Vert^2+H^2}$ and $\hat{d}_k [n] = \sqrt{\Vert \boldsymbol{b}-\hat{\boldsymbol{q}}_k[n]\Vert^2+H^2}$ denote the true distance and estimated distance between the BS and the $k$-th VU at time slot $n$. Then, the estimation error can be rewritten as
    \begin{equation} {
        \boldsymbol{e}_k[n] = 
        \frac{\sqrt{g_0}}{d_k[n]}\boldsymbol{a}_k(\boldsymbol{x}[n], {\theta}_k[n]) - \frac{\sqrt{g_0}}{\hat{d}_k[n]}\hat{\boldsymbol{a}}_k(\boldsymbol{x}[n], \hat{\theta}_k[n]).
        \label{prop1}}
    \end{equation}
    Then, we introduce an intermediate term $\frac{\sqrt{g_0}}{\hat{d}_k[n]}\boldsymbol{a}_k(\boldsymbol{x}[n], {\theta}_k[n]) $. According to the triangle inequality, \eqref{prop1} can be expressed as
    \begin{equation} {
    \begin{split}
        \boldsymbol{e}_k[n] &= \frac{\sqrt{g_0}}{d_k[n]}\boldsymbol{a}_k(\boldsymbol{x}[n], {\theta}_k[n]) - \frac{\sqrt{g_0}}{\hat{d}_k[n]}\boldsymbol{a}_k(\boldsymbol{x}[n], {\theta}_k[n]) \\&\quad+ \frac{\sqrt{g_0}}{\hat{d}_k[n]}\boldsymbol{a}_k(\boldsymbol{x}[n], {\theta}_k[n]) - \frac{\sqrt{g_0}}{\hat{d}_k[n]}\hat{\boldsymbol{a}}_k(\boldsymbol{x}[n], \hat{\theta}_k[n])\\
        &\leq \left \Vert \left( \frac{\sqrt{g_0}}{{d_k[n]}} - \frac{\sqrt{g_0}}{\hat{d}_k[n]} \right) \boldsymbol{a}_k(\boldsymbol{x}[n], {\theta}_k[n]) \right \Vert\\
        &\quad+ \left \Vert \frac{\sqrt{g_0}}{\hat{d}_k[n]} \left( \boldsymbol{a}_k(\boldsymbol{x}[n], {\theta}_k[n]) - \hat{\boldsymbol{a}}_k(\boldsymbol{x}[n], \hat{\theta}_k[n]) \right) \right \Vert\\
        &=\sqrt{M} {\alpha}_k[n]  + 
        \frac{\sqrt{g_0}}{\hat{d}_k[n]} \beta_k[n],
    \end{split}
    \label{propek1}}
    \end{equation}
    where
    \begin{equation} {
        {\alpha}_k[n] = \sqrt{g_0} \left \vert \frac{1}{{d_k[n]}} - \frac{1}{\hat{d}_k[n]} \right \vert,
        \label{propalpha}}
    \end{equation}
    \begin{equation} {
        \beta_k[n] = \left \Vert \boldsymbol{a}_k(\boldsymbol{x}[n], {\theta}_k[n]) - \hat{\boldsymbol{a}}_k(\boldsymbol{x}[n], \hat{\theta}_k[n]) \right \Vert.}
    \end{equation}
    According to the Lipschitz continuity of the distance, we have
    \begin{equation} {
        |d_k[n] - \hat{d}_k[n]| \leq \Vert \boldsymbol{q}_k[n] - \hat{\boldsymbol{q}}_k[n] \Vert \leq r_k.\label{Lipschitzc}}
    \end{equation}
    Combining \eqref{propalpha} and \eqref{Lipschitzc}, we have
    \begin{equation} {
        \alpha_k[n] = \sqrt{g_0} \frac{\vert d_k[n]-\hat{d}_k[n]\vert}{d_k[n]\hat{d}_k[n]} \leq \frac{\sqrt{g_0} r_k}{d_k[n]\hat{d}_k[n]}.
        \label{propal2}}
    \end{equation}
    For the position of the $m$-th MA $x_m[n] \in [0, L]$, the phase error of the steering vector is bounded by the maximum phase difference $\frac{2\pi L }{\lambda}\cos\theta$.
    Using $\left\vert\cos\theta_k [n] - \cos\hat{\theta}_k[n]\right\vert \leq \frac{Hr_k}{d_k[n]\hat{d}_k[n]}$, 
    we can obtain
    \begin{equation} {\begin{split}
        \beta_k[n] &\leq \sqrt{M} \frac{2\pi L }{\lambda} \left\vert\cos\theta_k [n] - \cos\hat{\theta}_k[n]\right\vert \\
        &\leq \sqrt{M} \frac{2\pi L H r_k }{\lambda d_k[n]\hat{d}_k[n]}.
    \end{split}}
    \label{propbeta}
    \end{equation}
    Combining \eqref{propal2} and \eqref{propbeta} into \eqref{propek1}, we can obtain \eqref{propek}.
}
\end{proof}

\section{Proof of Proposition \ref{propno2}}
\begin{proof}
    \label{apppro2}

{
        Firstly, Proposition \ref{propno1} shows that the channel estimation error is upper bounded. Meanwhile, the transmit power of the BS is also upper bounded. Hence, the objective function of \textbf{P2-1} is upper bounded.
        Besides, the objective function is continuous with respect to $\boldsymbol{w}$ and $\boldsymbol{x}$, and the feasible set is closed and bounded. Therefore, the objective function attains its maximum over the feasible set.

        Let $\boldsymbol{w}^{(l)}$ and $\boldsymbol{x}^{(l)}$ denote the solutions obtained by Algorithm 1 in the $l$-th iteration, and $\Psi \{\boldsymbol{w}^{(l)}, \boldsymbol{x}^{(l)} \}$ denote the value of the objective function of \textbf{P2-1} obtained by the $l$-th iteration. Then, given the antenna position $\boldsymbol{x}^{(l)}$, we optimize the transmit beamforming by solving \textbf{P3-1} to obtain $\boldsymbol{w}^{(l+1)}$. Then, we have
        \begin{equation}
            \Psi \{\boldsymbol{w}^{(l+1)}, \boldsymbol{x}^{(l)} \} \geq \Psi \{\boldsymbol{w}^{(l)}, \boldsymbol{x}^{(l)} \}.
        \end{equation}
        Accordingly, when given the optimized transmit beamforming $\boldsymbol{w}^{(l+1)}$, we optimize the antenna position by solving \textbf{P4-1} to obtain $\boldsymbol{x}^{(l+1)}$. Then, we have
        \begin{equation}
            \Psi \{\boldsymbol{w}^{(l+1)}, \boldsymbol{x}^{(l+1)} \} \geq \Psi \{\boldsymbol{w}^{(l+1)}, \boldsymbol{x}^{(l)} \}.
        \end{equation}
        Therefore, after each iteration of the AO algorithm, the value of the objective function is monotonically non-decreasing. In summary, Algorithm 1 can be guaranteed to converge.
}
\end{proof}

\end{appendices}

\bibliographystyle{IEEEtran}
\bibliography{myref}

@ARTICLE{9509294,
  author={Nguyen, Dinh C. and Ding, Ming and Pathirana, Pubudu N. and Seneviratne, Aruna and Li, Jun and Niyato, Dusit and Dobre, Octavia and Poor, H. Vincent},
  journal={IEEE Internet Things J.}, 
  title={{6G} Internet of Things: A Comprehensive Survey}, 
  year={Jan. 2022},
  volume={9},
  number={1},
  pages={359-383},}

@ARTICLE{9779322,
  author={Noor-A-Rahim, Md. and Liu, Zilong and Lee, Haeyoung and Khyam, Mohammad Omar and He, Jianhua and Pesch, Dirk and Moessner, Klaus and Saad, Walid and Poor, H. Vincent},
  journal={Proc. IEEE}, 
  title={{6G} for Vehicle-to-Everything ({V2X}) Communications: Enabling Technologies, Challenges, and Opportunities}, 
  year={Jun. 2022},
  volume={110},
  number={6},
  pages={712-734},}

@ARTICLE{9146378,
  author={Haydari, Ammar and Yılmaz, Yasin},
  journal={IEEE Trans. Intell. Transp. Syst.}, 
  title={Deep Reinforcement Learning for Intelligent Transportation Systems: A Survey}, 
  year={Jan. 2022},
  volume={23},
  number={1},
  pages={11-32},}

@ARTICLE{10133894,
  author={Gong, Taiyuan and Zhu, Li and Yu, F. Richard and Tang, Tao},
  journal={IEEE Trans. Intell. Transp. Syst.}, 
  title={Edge Intelligence in Intelligent Transportation Systems: A Survey}, 
  year={Sep. 2023},
  volume={24},
  number={9},
  pages={8919-8944},
}

@ARTICLE{10287319,
  author={Chi, Hao Ran and Radwan, Ayman and Zhang, Chunjiong and Taha, Abd-Elhamid M.},
  journal={IEEE Commun. Standards Mag.}, 
  title={Managing Energy-Experience Trade-Off with {AI} Towards {6G} Vehicular Networks}, 
  year={Sep. 2023},
  volume={7},
  number={3},
  pages={24-31},}

@ARTICLE{10874187,
  author={Ke, Yan-Hui and Zhou, Jian and Chen, Jian-Xin},
  journal={IEEE Trans. Veh. Technol.}, 
  title={Dual-Band Dual-Polarized Omnidirectional {MIMO} Antenna With Ultra-Low Profile for Vehicular Communications}, 
  year={Jun. 2025},
  volume={74},
  number={6},
  pages={9240-9251},}

@ARTICLE{9264694,
  author={Wong, Kai-Kit and Shojaeifard, Arman and Tong, Kin-Fai and Zhang, Yangyang},
  journal={IEEE Trans. Wireless Commun.}, 
  title={Fluid Antenna Systems}, 
  year={Mar. 2021},
  volume={20},
  number={3},
  pages={1950-1962},}

@article{liu2025movable,
  title={Movable Antennas Meet Low-Altitude Wireless Networks: Fundamentals, Opportunities, and Future Directions},
  author={Liu, Wenchao and Zhang, Xuhui and Wang, Chunjie and Ren, Jinke and Yuan, Weijie},
  journal={arXiv preprint arXiv:2506.13250},
  year={2025}
}

@ARTICLE{10654366,
  author={Liu, Wenchao and Zhang, Xuhui and Xing, Huijun and Ren, Jinke and Shen, Yanyan and Cui, Shuguang},
  journal={IEEE Wireless Commun. Lett.}, 
  title={{UAV}-Enabled Wireless Networks With Movable-Antenna Array: Flexible Beamforming and Trajectory Design}, 
  year={Mar. 2025},
  volume={14},
  number={3},
  pages={566-570},
}

@ARTICLE{10416363,
  author={Hu, Guojie and Wu, Qingqing and Xu, Kui and Si, Jiangbo and Al-Dhahir, Naofal},
  journal={IEEE Signal Process. Lett.}, 
  title={Secure Wireless Communication via Movable-Antenna Array}, 
  year={2024},
  volume={31},
  number={},
  pages={516-520},}

@ARTICLE{10663924,
  author={Pala, Sonia and Katwe, Mayur and Singh, Keshav and Tsiftsis, Theodoros A. and Li, Chih-Peng},
  journal={IEEE Trans. Veh. Technol.}, 
  title={Robust Transmission Design for {RIS}-Aided Full-Duplex-{RSMA} {V2X} Communications via Multi-Agent {DRL}}, 
  year={Jan. 2025},
  volume={74},
  number={1},
  pages={761-775},}

@ARTICLE{9676676,
  author={Xu, Yongjun and Xie, Hao and Wu, Qingqing and Huang, Chongwen and Yuen, Chau},
  journal={IEEE Trans. Commun.}, 
  title={Robust Max-Min Energy Efficiency for {RIS}-Aided {HetNets} With Distortion Noises}, 
  year={Feb. 2022},
  volume={70},
  number={2},
  pages={1457-1471},}

@ARTICLE{10666854,
  author={Lyu, Wanting and Yang, Songjie and Xiu, Yue and Chen, Xinyi and Zhang, Zhongpei and Assi, Chadi and Yuen, Chau},
  journal={IEEE Wireless Commun. Lett.}, 
  title={Dual-Robust Integrated Sensing and Communication: Beamforming Under {CSI} Imperfection and Location Uncertainty}, 
  year={Nov. 2024},
  volume={13},
  number={11},
  pages={3124-3128},}

@ARTICLE{9110587,
  author={Zhou, Gui and Pan, Cunhua and Ren, Hong and Wang, Kezhi and Renzo, Marco Di and Nallanathan, Arumugam},
  journal={IEEE Wireless Commun. Lett.}, 
  title={Robust Beamforming Design for Intelligent Reflecting Surface Aided {MISO} Communication Systems}, 
  year={Oct. 2020},
  volume={9},
  number={10},
  pages={1658-1662},}

@ARTICLE{7547360,
  author={Sun, Ying and Babu, Prabhu and Palomar, Daniel P.},
  journal={IEEE Trans. Signal Process.}, 
  title={Majorization-Minimization Algorithms in Signal Processing, Communications, and Machine Learning}, 
  year={Feb. 2017},
  volume={65},
  number={3},
  pages={794-816}}

@ARTICLE{11359731,
  author={Wang, Chunjie and Zhang, Xuhui and Liu, Wenchao and Ren, Jinke and Wang, Shuqiang and Shen, Yanyan and Ye, Kejiang and Tsang, Kim Fung},
  journal={IEEE Trans. Consum. Electron.}, 
  title={{RSMA}-Enhanced Data Collection in {RIS}-Assisted Intelligent Consumer Transportation Systems}, 
  year={to appear, 2026},
  volume={},
  number={},
  pages={},
}

@ARTICLE{11396947,
  author={Wang, Chunjie and Zhang, Xuhui and Liu, Wenchao and Ren, Jinke and Xing, Huijun and Wang, Shuqiang and Shen, Yanyan and Ye, Kejiang},
  journal={IEEE Trans. Consum. Electron.}, 
  title={Joint Beamforming and Resource Coordination for {RIS}-Aided {ISAC} over Secure {LAWNs}}, 
  year={to appear, 2026},
  volume={},
  number={},
  pages={},
}

@ARTICLE{11328802,
  author={Zhang, Xuhui and Liu, Wenchao and Wang, Chunjie and Ren, Jinke and Xing, Huijun and Wang, Shuqiang and Shen, Yanyan},
  journal={IEEE Trans. Consum. Electron.}, 
  title={{UAV}-Enabled Fluid Antenna Systems for Multi-Target Wireless Sensing over {LAWCNs}}, 
  year={to appear, 2026},
  volume={},
  number={},
  pages={},
}

@ARTICLE{9355403,
  author={Xu, Yongjun and Gui, Guan and Gacanin, Haris and Adachi, Fumiyuki},
  journal={IEEE Commun. Surveys Tuts.}, 
  title={A Survey on Resource Allocation for {5G} Heterogeneous Networks: Current Research, Future Trends, and Challenges}, 
  year={2nd Quart. 2021},
  volume={23},
  number={2},
  pages={668-695},
}

@ARTICLE{9293148,
  author={Hong, Sheng and Pan, Cunhua and Ren, Hong and Wang, Kezhi and Chai, Kok Keong and Nallanathan, Arumugam},
  journal={IEEE Trans. Wireless Commun.}, 
  title={Robust Transmission Design for Intelligent Reflecting Surface-Aided Secure Communication Systems With Imperfect Cascaded {CSI}}, 
  year={Apr. 2021},
  volume={20},
  number={4},
  pages={2487-2501},
}

@ARTICLE{11391523,
  author={Xu, He and Wu, Tuo and Tian, Ye and Jin, Ming and Liu, Wei and Guo, Qinghua and Elkashlan, Maged and Valenti, Matthew C. and Chae, Chan-Byoung and Tong, Kin-Fai and Wong, Kai-Kit},
  journal={IEEE Trans. Wireless Commun.}, 
  title={The Future Is Fluid: Revolutionizing {DOA} Estimation With Sparse Fluid Antennas}, 
  year={2026},
  volume={25},
  number={},
  pages={11546-11561},
}

@ARTICLE{9583590,
  author={Raza, Salman and Wang, Shangguang and Ahmed, Manzoor and Anwar, Muhammad Rizwan and Mirza, Muhammad Ayzed and Khan, Wali Ullah},
  journal={IEEE Internet Things J.}, 
  title={Task Offloading and Resource Allocation for {IoV} Using {5G} {NR}-{V2X} Communication}, 
  year={Jul. 2022},
  volume={9},
  number={13},
  pages={10397-10410},
}

@INPROCEEDINGS{10012694,
  author={Zhang, Xuhui and Xing, Huijun and Zang, Weilin and Jin, Zhenzhen and Shen, Yanyan},
  booktitle={Proc. IEEE Veh. Technol. Conf. (VTC)}, 
  title={Cybertwin-Driven Multi-Intelligent Reflecting Surfaces aided Vehicular Edge Computing Leveraged by Deep Reinforcement Learning}, 
  year={London, UK, Sep. 2022},
  volume={},
  number={},
  pages={1-7}
}

@ARTICLE{10354525,
  author={Zhao, Wei and Shi, Ke and Liu, Zhi and Wu, Xuangou and Zheng, Xiao and Wei, Linna and Kato, Nei},
  journal={IEEE Trans. Mobile Comput.}, 
  title={{DRL} Connects {Lyapunov} in Delay and Stability Optimization for Offloading Proactive Sensing Tasks of {RSUs}}, 
  year={Jul. 2024},
  volume={23},
  number={7},
  pages={7969-7982},
}

@INPROCEEDINGS{10946517,
  author={Zhang, Tiexin and Wu, Yinyu and Liu, Wenchao and Wang, Chunjie and Jin, Zhenzhen and Zhang, Xuhui and Shen, Yanyan},
  booktitle={Proc. IEEE Int. Conf. Commun. Technol. (ICCT)}, 
  title={Utility-Efficient Edge-Cloud Collaborative Space-Air-Ground Integrated Internet of Things}, 
  year={Chengdu, China, Oct. 2024},
  volume={},
  number={},
  pages={389-394},
}

@ARTICLE{10972043,
  author={Zhang, Xuhui and Liu, Wenchao and Ren, Jinke and Xing, Huijun and Gui, Gui and Shen, Yanyan and Cui, Shuguang},
  journal={IEEE Internet Things J.}, 
  title={Latency Minimization for {UAV}-Enabled Federated Learning: Trajectory Design and Resource Allocation}, 
  year={Jul. 2025},
  volume={12},
  number={14},
  pages={27097-27112},
}

@ARTICLE{10736570,
  author={Liu, Zhang and Du, Hongyang and Lin, Junzhe and Gao, Zhibin and Huang, Lianfen and Hosseinalipour, Seyyedali and Niyato, Dusit},
  journal={IEEE Trans. Mobile Comput.}, 
  title={{DNN} Partitioning, Task Offloading, and Resource Allocation in Dynamic Vehicular Networks: A {Lyapunov}-Guided Diffusion-Based Reinforcement Learning Approach}, 
  year={Mar. 2025},
  volume={24},
  number={3},
  pages={1945-1962},
}

@ARTICLE{10980172,
  author={Zhang, Xuhui and Xing, Huijun and Shen, Yanyan and Xu, Jie and Cui, Shuguang},
  journal={IEEE Trans. Wireless Commun.}, 
  title={Age of Information Minimization in {UAV}-Enabled {IoT} Networks via Federated Reinforcement Learning}, 
  year={Sep. 2025},
  volume={24},
  number={9},
  pages={7923-7939},
}

@ARTICLE{11298206,
  author={Chen, Yiyang and Liu, Wenchao and Zhang, Xuhui and Ren, Jinke and Xing, Huijun and Wang, Shuqiang and Shen, Yanyan and Tsang, Kim-Fung},
  journal={IEEE Trans. Consum. Electron.}, 
  title={Full-Duplex Integrated Sensing, Communication, and Computation over Low-Altitude Wireless Networks}, 
  year={to appear, 2026},
  volume={},
  number={},
  pages={},
}

@ARTICLE{9650760,
  author={Wong, Kai-Kit and Tong, Kin-Fai},
  journal={IEEE Trans. Wireless Commun.}, 
  title={Fluid Antenna Multiple Access}, 
  year={Jul. 2022},
  volume={21},
  number={7},
  pages={4801-4815},
}

@ARTICLE{10278220,
  author={Zhu, Lipeng and Ma, Wenyan and Zhang, Rui},
  journal={IEEE Commun. Lett.}, 
  title={Movable-Antenna Array Enhanced Beamforming: Achieving Full Array Gain With Null Steering}, 
  year={Dec. 2023},
  volume={27},
  number={12},
  pages={3340-3344},
}

@INPROCEEDINGS{10693833,
  author={Kuang, Ziming and Liu, Wenchao and Wang, Chunjie and Jin, Zhenzhen and Ren, Jinke and Zhang, Xuhui and Shen, Yanyan},
  booktitle={Proc. IEEE/CIC Int. Conf. Commun. China Workshops (ICCC Workshops)}, 
  title={Movable-Antenna Array Empowered {ISAC} Systems for Low-Altitude Economy}, 
  year={Hangzhou, China, Aug. 2024},
  volume={},
  number={},
  pages={776-781},
}

@ARTICLE{11108293,
  author={Dong, Xiangyu and Lyu, Wanting and Yang, Ran and Xiu, Yue and Mei, Weidong and Zhang, Zhongpei},
  journal={IEEE Commun. Lett.}, 
  title={Movable Antenna Enhanced Secure Simultaneous Wireless Information and Power Transfer}, 
  year={Oct. 2025},
  volume={29},
  number={10},
  pages={2356-2360},
}

@ARTICLE{11156108,
  author={Feng, Zhiyong and Zhao, Yujia and Yu, Kan and Li, Dong},
  journal={IEEE Trans. Commun.}, 
  title={Movable Antenna Empowered {PLS} With Eve’s Location Uncertainty: Joint Optimization of Beamforming and Antenna Positions}, 
  year={Dec. 2025},
  volume={73},
  number={12},
  pages={13708-13724},
}

@ARTICLE{11240557,
  author={Zhang, Xuhui and Liu, Wenchao and Ren, Jinke and Wang, Chunjie and Xing, Huijun and Shen, Yanyan and Cui, Shuguang},
  journal={IEEE Trans. Netw. Sci. Eng.}, 
  title={Movable-Antenna Empowered {AAV}-Enabled Data Collection over Low-Altitude Wireless Networks}, 
  year={2026},
  volume={13},
  number={},
  pages={4506-4523},
}

@ARTICLE{11261377,
  author={Yang, Hanyu and Xing, Chengwen and Gong, Shiqi and Ju, Xin and Zhao, Nan and Niyato, Dusit},
  journal={IEEE Trans. Wireless Commun.}, 
  title={A Framework for Energy-Efficiency Optimization in {MA}-Aided {MU}-{MIMO} Systems}, 
  year={2026},
  volume={25},
  number={},
  pages={7551-7568},
}

@ARTICLE{11389911,
  author={Wu, Tuo and Yao, Junteng and Zheng, Jianchao and Zhi, Kangda and Li, Xingwang and Elkashlan, Maged and Al-Dhahir, Naofal and Valenti, Matthew C. and Yuen, Chau},
  journal={IEEE Trans. Commun.}, 
  title={Unleashing More Potential from {FAS}: A Framework of {FAS}-{CoNOMA} Systems}, 
  year={to appear, 2026},
  volume={},
  number={},
  pages={},
}

@ARTICLE{9151971,
  author={Chen, Yingyang and Wen, Miaowen and Wang, Li and Liu, Weiping and Hanzo, Lajos},
  journal={IEEE Trans. Commun.}, 
  title={{SINR}-Outage Minimization of Robust Beamforming for the Non-Orthogonal Wireless Downlink}, 
  year={Nov. 2020},
  volume={68},
  number={11},
  pages={7247-7257},
}

@ARTICLE{9184012,
  author={Hu, Xiaoling and Zhong, Caijun and Alouini, Mohamed-Slim and Zhang, Zhaoyang},
  journal={IEEE Wireless Commun. Lett.}, 
  title={Robust Design for {IRS}-Aided Communication Systems With User Location Uncertainty}, 
  year={Jan. 2021},
  volume={10},
  number={1},
  pages={63-67},
}

@ARTICLE{9845399,
  author={Chen, Yun and Zhang, Guoping and Xu, Hongbo and Ren, Yinshuan and Chen, Xue and Li, Ruijie},
  journal={IEEE Commun. Lett.}, 
  title={Robust Beamforming and Power Allocation for Secure Communication in Systems With Imperfect Channel and Hardware Impairments}, 
  year={Oct. 2022},
  volume={26},
  number={10},
  pages={2277-2281},
}

@ARTICLE{10027173,
  author={Liao, Bin and Xiong, Xue and Quan, Zhi},
  journal={IEEE Trans. Veh. Technol.}, 
  title={Robust Beamforming Design for Dual-Function Radar-Communication System}, 
  year={Jun. 2023},
  volume={72},
  number={6},
  pages={7508-7516},
}

@ARTICLE{10513353,
  author={Dai, Jianxin and Ye, Jianglin and Wang, Kezhi and Pan, Cunhua and Fan, Hui},
  journal={IEEE Wireless Commun. Lett.}, 
  title={Joint Radar-Communication Beamforming Considering Both Transceiver Hardware Impairments and Imperfect {CSI}}, 
  year={Jul. 2024},
  volume={13},
  number={7},
  pages={1898-1902},
}

@ARTICLE{11192669,
  author={Ni, Hao and Xu, Yongjun and Li, Xingwang and Yuan, Yiming and Yuen, Chau and Chen, Li},
  journal={IEEE Trans. Veh. Technol.}, 
  title={Robust Resource Allocation for Movable Antenna-Enabled {MISO} Communication Systems with Angular Uncertainties}, 
  year={to appear, 2026},
  volume={},
  number={},
  pages={},
}

@ARTICLE{11107317,
  author={Lyu, Bin and Liu, Hao and Hua, Meng and Hong, Wenqing and Gong, Shimin and Tian, Feng and Jamalipour, Abbas},
  journal={IEEE Trans. Wireless Commun.}, 
  title={Robust Transmission Design for Reconfigurable Intelligent Surface and Movable Antenna Enabled Symbiotic Radio Communications}, 
  year={2026},
  volume={25},
  number={},
  pages={1702-1716},
}

@ARTICLE{10012421,
  author={Wei, Zhiqing and Qu, Hanyang and Wang, Yuan and Yuan, Xin and Wu, Huici and Du, Ying and Han, Kaifeng and Zhang, Ning and Feng, Zhiyong},
  journal={IEEE Internet Things J.}, 
  title={Integrated Sensing and Communication Signals Toward {5G-A} and {6G}: A Survey}, 
  year={Jul. 2023},
  volume={10},
  number={13},
  pages={11068-11092},
}

@INPROCEEDINGS{11374087,
  author={Wang, Yike and Wu, Zhike and Chen, Jiang and Wang, Chunjie and Zhang, Xuhui and Shen, Yanyan},
  booktitle={Proc. IEEE Int. Conf. Commun. Technol. (ICCT)}, 
  title={QoS-Aware Integrated Sensing, Communication, and Control with Movable Antenna}, 
  year={Shenyang, China, Oct. 2025},
  volume={},
  number={},
  pages={565-570},
}

@article{liu2025uav,
  title={{UAV}-Enabled {ISAC} Systems with Fluid Antennas},
  author={Liu, Wenchao and Zhang, Xuhui and Ren, Jinke and Yuan, Weijie and You, Changsheng and Li, Shuangyang},
  journal={arXiv preprint arXiv:2509.21105},
  year={2025}
}

@INPROCEEDINGS{11374005,
  author={Chen, Yiyang and Liu, Wenchao and Wang, Chunjie and Wu, Yinyu and Zhang, Xuhui and Shen, Yanyan},
  booktitle={Proc. IEEE Int. Conf. Commun. Technol. (ICCT)}, 
  title={Latency Minimization for Multi-{AAV}-Enabled {ISCC} Systems with Movable Antenna}, 
  year={Shenyang, China, Oct. 2025},
  volume={},
  number={},
  pages={1232-1238},
}

@article{wang2025joint,
  title={Joint Beamforming Design for {RIS}-Empowered {NOMA}-{ISAC} Systems},
  author={Wang, Chunjie and Zhang, Xuhui and Ren, Jinke and Liu, Wenchao and Wang, Shuqiang and Shen, Yanyan and Ye, Kejiang and Xu, Chengzhong and Niyato, Dusit},
  journal={arXiv preprint arXiv:2508.13842},
  year={2025}
}
\end{document}